\newcommand{\mX}{\mathcal X}
\newcommand{\mY}{\mathcal Y}
\newcommand{\mZ}{\mathcal Z}
\newcommand{\mK}{\mathcal K}
\newcommand{\mF}{\mathcal F}
\newcommand{\mG}{\mathcal G}
\newcommand{\RP}{\right\rangle}
\newcommand{\LP}{\left\langle}
\newcommand{\beq}{\begin{equation}}
\newcommand{\eeq}{\end{equation}}
\newcommand{\bAL}{\begin{align}}
\newcommand{\eAL}{\end{align}}
\newcommand{\LCM}{\nabla}
\newcommand{\RIM}{Ric}
\newcommand{\TG}{\dot{\gamma}}
\newcommand{\LCS}{\mathcal D}
\newcommand{\tr}{\text{tr}}
\newcounter{example}[section]
\newcounter{remark}[section]
\newcounter{theorem}[section]
\newcounter{proposition}[section]
\newcounter{lemma}[section]
\newcounter{corollary}[section]
\newcounter{definition}[section]
\def\theremark{\arabic{section}.\arabic{remark}}
\def\thetheorem{\arabic{section}.\arabic{theorem}}
\def\thedefinition{\arabic{section}.\arabic{definition}}
\renewcommand*{\email}[1]{\footnote{Electronic address: \href{mailto:#1}{\nolinkurl{#1}} }}
\newenvironment{proof}{\noindent {\textit{Proof:}}
}{$\Box$\medskip}
\newenvironment{theorem}{\refstepcounter{theorem}
\medskip\noindent{\bf Theorem \thetheorem}:}{$\Box$\medskip}
\newenvironment{proposition}{\refstepcounter{theorem}\medskip\noindent{\bf
Proposition \thetheorem}:}{$\Box$\medskip}
\newenvironment{definition}{\refstepcounter{definition}\medskip\noindent{\bf
Definition \thedefinition}:}{$\Box$\medskip}
\begin{document}

\title{ 
Killing tensors and photon surfaces in foliated spacetimes}
\author{Kirill Kobialko\email{kobyalkokv@yandex.ru}}
\author{Igor Bogush\email{igbogush@gmail.com}}
\author{Dmitri Gal'tsov\email{galtsov@phys.msu.ru}}
\affiliation{Faculty of Physics, Moscow State University, 119899, Moscow, Russia}

\begin{abstract}
We present a purely geometric method for constructing a rank two Killing tensor in a spacetime with a codimension one foliation that lifts the trivial Killing tensors from slices to the entire manifold. The resulting Killing tensor can be nontrivial. 
A deep connection is found between the existence of such a Killing tensor and the presence of generalized photon surfaces in spacetime with two Killing vector fields. This technique generates Killing tensors in a purely algebraic way, without solving differential equations.  The use of our method is demonstrated for  Kerr, Kerr-Newman-NUT-AdS metrics and  Kerr-NUT-AdS multicharge gauged supergravity solution. 
 
\end{abstract}


\maketitle

\setcounter{page}{2}

\setcounter{equation}{0}
\setcounter{subsection}{0}
\setcounter{section}{0}

\section{Introduction}
Killing vectors explicitly describe the symmetries of a spacetime manifold equipped with a metric, which are related to its isometries. They provide a set of integrals of motion for geodesics and wave operators in field theories. A natural generalization of Killing vectors is the Killing tensors associated with other hidden symmetries of spacetime  \cite{Frolov:2017kze} which give conserved quantities of higher order in conjugate momenta. The first striking example of such a conserved quantity, quadratic in the particle momentum, is Carter's constant \cite{Carter:1968ks}, found for the Kerr metric. Although the rotating Kerr solution does not have spherical symmetry, the Carter constant is a generalization of the square of the angular momentum. However, the process of constructing  non-trivial Killing tensors is much more complicated than that for Killing vectors. Most of the known results  stem from the fact that the Killing tensor equation simplifies in certain special spacetimes, such as a space with a warped/twisted product structure \cite{Krtous:2015ona}, a space admitting a hypersurface orthogonal Killing vector field \cite{Garfinkle:2010er,Garfinkle:2013cha}, or special conformal Killing fields \cite{Koutras,Barnes:2002pm}.

In this article, we present a new purely geometric method for generating Killing tensors in manifolds with foliation of codimension one, based on lifting the Killing tensors constructed  in slices with an arbitrary second fundamental from. This approach does not require slices to be orthogonal to the Killing vector field or a spacetime to have the warped/twisted product structure. We present general equations for lifting of an arbitrary Killing tensor, which reduce to the result of Ref. \cite{Garfinkle:2010er} in the case of totally geodesic slices.  

We demonstrate mechanism of the emergence of a nontrivial Killing tensor in a total manifold from the trivial ones in slices constructed with the help of the Killing vectors tangent to them. We completely integrate the arising equations, find general compatibility and integrability conditions, and prove the theorem \ref{KBG}  providing  a generating technique for non-trivial Killing tensors, which is a purely geometric analogue of the method of separation of variables in the Hamilton-Jacobi formalism. 

Furthermore, we investigate relationship between the resulting slice structure and the structure of fundamental photon surfaces introduced in \cite{Kobialko:2020vqf}, which are a natural generalization of photon spheres \cite{Yoshino1,Yoshino:2019dty,Yoshino:2019mqw,Claudel:2000yi,Teo,Galtsov:2019fzq,Galtsov:2019bty,Cao:2019vlu}. The concept of photon spheres and surfaces plays a crucial role in study of the black hole shadows \cite{Grover:2017mhm,Grenzebach,Grenzebach:2015oea,Cunha:2018acu,Shipley:2019kfq}, Penrose inequalities \cite{Shiromizu:2017ego,Feng:2019zzn,Yang:2019zcn}, uniqueness \cite{Cederbaum,Yazadjiev:2015hda,Yazadjiev:2015jza,Yazadjiev:2015mta,Rogatko,Cederbaumo,Yoshino:2016kgi,Koga:2020gqd} and integrability  \cite{Pappas:2018opz,Glampedakis:2018blj} theorems. The integrability conditions for Killing tensors ensure the foliation slices to be fundamental photon surfaces if the generalized photon region inequalities are fulfilled. This constitutes one of the most important results we obtain here which  generalizes that of Ref. \cite{Koga:2020akc} for the case of stationary spaces without resorting to null geodesic equations. We hope that the result obtained helps to reveal  fundamental connection between Killing tensors and the structure of photon surfaces.

We apply this new technique to Kerr, Kerr-Newman-NUT-AdS and Kerr-NUT-AdS multicharge gauged supergravity \cite{Chong:2004na} solutions showing that this technique allows to obtain Killing tensors \cite{Kubiznak:2007kh,Vasudevan:2005bz}  purely algebraically without solving any differential equations. For these spacetimes we reveal the nature of Killing tensors hidden symmetry as arising from isometries in low-dimensional slices of a smooth foliation.

In Sec. \ref{sec:killing} we briefly describe the equations for the Killing vectors and Killing tensors of rank two. In Sec. \ref{sec:foliation} we consider spacetimes with foliation of codimension one and derive the equations governing the interplay between symmetries in the bulk and in the slices. In Sec. \ref{sec:generation} we describe the generating technique for such spacetimes. In Sec. \ref{sec:photon} we reveal the connection between the Killing tensors and the fundamental photon surfaces. Sec. \ref{sec:axially} provides some examples with axial symmetry. The appendices contain the proofs of the statements made in the main text and some generalizations.

\section{Setup}
\label{sec:killing}

Let $M$ be a Lorentzian manifold of dimension $m$ with scalar product $\LP \;\cdot\; ,\;\cdot\;\RP$ and Levi-Civita connection $\LCM$\footnote{Here we also use the notation $\underset{\mX \leftrightarrow \mY}{{\rm Sym}}\left\{ B(\mX, \mY) \right\}= B(\mX, \mY)+ B(\mY, \mX)$ and $[\mX, \mY]= \LCM_\mX\mY -\LCM_\mY\mX$.}.

\begin{definition} 
A vector field $\mK:M\rightarrow TM$ is called a Killing vector field if \cite{Chen} 
\begin{align}\label{eq:killing_equation}
\underset{\mX \leftrightarrow \mY}{{\rm Sym}}\left\{\LP \LCM_\mX\mK, \mY \RP\right\}=0, \quad \forall \mX,\mY\in TM. 
\end{align}
\end{definition} 

\begin{definition} 
A linear self-adjoint mapping $K(\;\cdot\;):TM\rightarrow TM$ is called a Killing mapping if
\begin{align}\label{eq:killing_mapping_equation}
\underset{\mX \leftrightarrow \mY\leftrightarrow \mZ}{{\rm Sym}}\left\{\LP\LCM_\mX K(\mZ),\mY\RP\right\}=0, \quad \forall \mX,\mY,\mZ\in TM,
\end{align} 
where the linear mapping $\LCM_\mX K(\;\cdot\;):TM\rightarrow TM$ is defined as follows
\begin{align}
\LCM_\mX K( \mY)\equiv\LCM_\mX( K( \mY))-K(\LCM_\mX (\mY)), \quad \forall \mX,\mY\in TM.
\end{align} 
\end{definition} 

One can introduce a Killing tensor as a symmetric form $K(\mX,\mY)=\LP K(\mX),\mY\RP$, which is associated with the conservation law quadratic in momenta. Indeed, consider the functions $\mathcal Q_{\mathcal K}(\;\cdot\;):TM \rightarrow\mathbb R$ and $\mathcal Q_{K}(\;\cdot\;,\;\cdot\;):TM\oplus TM\rightarrow\mathbb R$ defined as
\begin{align}
\mathcal Q_{\mathcal K}(\mX)\equiv\LP\mathcal K, \mX\RP, \quad \mathcal Q_{K}(\mX,\mY)\equiv\LP K(\mX),\mY\RP, \quad \forall \mX,\mY\in TM.
\end{align} 

\begin{proposition}
Let $\gamma:\mathbb{R}\rightarrow M$ be an affinely parameterized geodesic, i.e $\LCM_{\TG} \TG=0$. Then, the functions $\mathcal Q_{K}$ and $\mathcal Q_{\mathcal K}$ induce the conserved quantities along $\gamma$:
\begin{align}
\frac{d}{ds}\mathcal Q_{\mathcal K}(\TG)=0, \quad \frac{d}{ds}\mathcal Q_{K}(\TG,\TG)=0.
\end{align}
\end{proposition}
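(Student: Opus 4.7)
The plan is to compute the derivative $d\mathcal Q/ds$ directly by expanding $\TG$ applied to the scalar defining each conserved quantity, then use metric compatibility of $\LCM$, the geodesic condition $\LCM_\TG \TG = 0$, and finally the appropriate Killing equation evaluated on the special choice $\mX = \mY = \mZ = \TG$. In both cases the point is that all three auxiliary vector slots in the Killing identity collapse to the single vector $\TG$, so full symmetrization becomes trivial and the identity reduces to the vanishing of a single scalar.

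For the Killing vector case, I would write
\begin{align}
\frac{d}{ds}\mathcal Q_{\mathcal K}(\TG) = \TG\LP \mK, \TG\RP = \LP \LCM_\TG \mK, \TG\RP + \LP \mK, \LCM_\TG \TG\RP.
\end{align}
The second term vanishes by the geodesic equation, and the Killing equation \eqref{eq:killing_equation} with $\mX = \mY = \TG$ gives $2\LP \LCM_\TG \mK, \TG\RP = 0$, which kills the first term.

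For the Killing mapping case, first expand
\begin{align}
\frac{d}{ds}\mathcal Q_K(\TG,\TG) = \TG \LP K(\TG), \TG\RP = \LP \LCM_\TG\bigl(K(\TG)\bigr), \TG\RP + \LP K(\TG), \LCM_\TG \TG\RP,
\end{align}
and again drop the last term by the geodesic condition. Then use the definition of $\LCM_\TG K$ to rewrite $\LCM_\TG(K(\TG)) = (\LCM_\TG K)(\TG) + K(\LCM_\TG \TG) = (\LCM_\TG K)(\TG)$, again by the geodesic equation. The remaining scalar $\LP (\LCM_\TG K)(\TG), \TG\RP$ vanishes by the Killing mapping equation \eqref{eq:killing_mapping_equation} evaluated at $\mX = \mY = \mZ = \TG$, since in that degenerate case the full symmetrization over $\{\mX, \mY, \mZ\}$ just multiplies this single scalar by a positive integer.

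There is no real obstacle here: everything is a three-line calculation using metric compatibility and the defining identities. The only point one should be slightly careful about is the use of the derivation rule for $\LCM_\TG K$ as a tensor derivative, which is precisely the content of the definition given above the proposition; this is what allows passing from $\LCM_\TG(K(\TG))$ to $(\LCM_\TG K)(\TG)$ modulo the geodesic term.
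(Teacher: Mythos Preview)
Your proof is correct and follows essentially the same approach as the paper's: expand the derivative via metric compatibility, drop terms using the geodesic condition $\LCM_{\TG}\TG=0$, and kill the remaining term by evaluating the Killing equation at $\mX=\mY(=\mZ)=\TG$. The paper writes this slightly more compactly by absorbing the single surviving term into the symmetrization notation (with the factors $1/2$ and $1/6$), but the argument is identical to yours.
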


\begin{proof}
\begin{subequations}
\begin{align}
\frac{d}{ds}\mathcal Q_{\mathcal K}(\TG)\equiv\TG\LP\mathcal K,\TG \RP=\frac{1}{2}\cdot\underset{\TG \leftrightarrow \TG}{{\rm Sym}}\left\{\LP\LCM_{\TG}\mathcal K,\TG \RP\right\}+\LP\mathcal K,\LCM_{\TG}\TG \RP=0,
\end{align}
\begin{align}
\frac{d}{ds}\mathcal Q_{K}(\TG,\TG)\equiv\TG\LP K(\TG),\TG\RP=\frac{1}{6}\cdot\underset{\TG \leftrightarrow \TG\leftrightarrow \TG}{{\rm Sym}}\LP \LCM_{\TG}K(\TG),\TG\RP+\LP K(\LCM_{\TG}\TG),\TG\RP+\LP K(\TG),\LCM_{\TG}\TG\RP=0.
\end{align} 
\end{subequations}
\end{proof}

\begin{proposition}\label{prop:trivial_tensor}
Let $\mathcal K_\alpha$ be a set of $n$ Killing vector fields. Then, one can define the following trivial Killing mapping                  
\begin{align}\label{eq:trivial_tensor}
K(\mX)=\alpha \mX+\sum^n_{\alpha,\beta=1}\gamma^{\alpha\beta}\LP \mX,\mathcal K_\alpha\RP\mathcal K_\beta, \quad \gamma^{\alpha\beta}=\gamma^{\beta\alpha},
\end{align} 
where $\alpha$ and $\gamma^{\alpha\beta}$ is the set of $n(n+1)/2+1$ independent constants in $M$.
\end{proposition}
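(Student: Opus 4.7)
\medskip

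\noindent \textbf{Proof proposal.} The plan is to split the map into its two natural pieces,
\[
K_0(\mX)=\alpha\,\mX,\qquad K_1(\mX)=\sum_{\alpha,\beta=1}^n\gamma^{\alpha\beta}\LP\mX,\mathcal K_\alpha\RP\mathcal K_\beta,
\]
verify self-adjointness of each, and then check the defining equation \eqref{eq:killing_mapping_equation} separately. Self-adjointness of $K_0$ is immediate; for $K_1$ it reduces to the hypothesis $\gamma^{\alpha\beta}=\gamma^{\beta\alpha}$, since $\LP K_1(\mX),\mY\RP=\sum\gamma^{\alpha\beta}\LP \mX,\mathcal K_\alpha\RP\LP\mathcal K_\beta,\mY\RP$ is manifestly symmetric in $\mX,\mY$ after relabelling.

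Next I would dispose of $K_0$. By the very definition of $\LCM_\mX K(\;\cdot\;)$ as a tensorial derivation, $\LCM_\mX K_0(\mZ)=\LCM_\mX(\alpha\mZ)-\alpha\LCM_\mX\mZ=0$, so $K_0$ contributes nothing to the left-hand side of \eqref{eq:killing_mapping_equation}. This is just the statement that the identity mapping (equivalently, the metric) is parallel.

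The core of the argument is the computation for $K_1$. Applying the Leibniz rule, the two terms containing $\LCM_\mX\mZ$ cancel exactly against $K_1(\LCM_\mX\mZ)$, leaving
\[
\LCM_\mX K_1(\mZ)=\sum_{\alpha,\beta=1}^n\gamma^{\alpha\beta}\Bigl[\LP\mZ,\LCM_\mX \mathcal K_\alpha\RP\mathcal K_\beta+\LP\mZ,\mathcal K_\alpha\RP\LCM_\mX \mathcal K_\beta\Bigr].
\]
Pairing with $\mY$ produces two types of terms of the schematic form $\gamma^{\alpha\beta}\LP\LCM_\mX\mathcal K_\alpha,\mZ\RP\LP\mathcal K_\beta,\mY\RP$ and $\gamma^{\alpha\beta}\LP\mZ,\mathcal K_\alpha\RP\LP\LCM_\mX\mathcal K_\beta,\mY\RP$. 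Relabelling $\alpha\leftrightarrow\beta$ in the second term and using $\gamma^{\alpha\beta}=\gamma^{\beta\alpha}$ combines the two into a single structure, and then the Killing equation \eqref{eq:killing_equation} for each $\mathcal K_\alpha$ gives the antisymmetry $\LP\LCM_\mX\mathcal K_\alpha,\mY\RP=-\LP\LCM_\mY\mathcal K_\alpha,\mX\RP$.

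Finally I would show that the total symmetrization over $(\mX,\mY,\mZ)$ annihilates the surviving expression. Writing $\Phi_\alpha(\mX,\mY):=\LP\LCM_\mX\mathcal K_\alpha,\mY\RP$, the remaining quantity is (up to an overall factor) $\sum\gamma^{\alpha\beta}\Phi_\alpha(\mX,\mZ)\LP\mathcal K_\beta,\mY\RP$; since $\Phi_\alpha$ is antisymmetric in its two slots while $\gamma^{\alpha\beta}\LP\mathcal K_\beta,\mY\RP$ pairs symmetrically with the other variable through the $\alpha\leftrightarrow\beta$ relabelling, every contribution cancels against its transposed partner in the six-term sum. The only real pitfall is keeping the bookkeeping of the six permutations and the $\alpha\leftrightarrow\beta$ swap consistent, but no deep identities beyond \eqref{eq:killing_equation} and the symmetry of $\gamma^{\alpha\beta}$ are required.
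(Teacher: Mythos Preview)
Your proof is correct and follows essentially the same approach as the paper: both split $K$ into the identity piece and the projector piece by linearity, and both reduce the Killing-mapping equation for the second piece to the antisymmetry $\LP\LCM_\mX\mathcal K_\alpha,\mY\RP=-\LP\LCM_\mY\mathcal K_\alpha,\mX\RP$ together with $\gamma^{\alpha\beta}=\gamma^{\beta\alpha}$. The only cosmetic difference is that the paper carries $\alpha$ and $\gamma^{\alpha\beta}$ as functions through the computation and shows the Killing equation forces $\mX(\alpha)=0$, $\mX(\gamma^{\alpha\beta})=0$, whereas you invoke their constancy from the outset; the underlying computation is the same.
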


\begin{proof}
The Killing mapping is linear with respect to its argument $\mX$, and the Killing equations are linear with respect to the mapping $K$. Thus, let us prove Proposition \ref{prop:trivial_tensor} for both terms separately. For the first term, the mapping $K$ is proportional to the identity $K(\mX)=\alpha \mX$.
Then, the Killing equations give
\begin{align}
\underset{\mX \leftrightarrow \mY\leftrightarrow \mZ}{{\rm Sym}}\left\{\LP \mX \left(\alpha\right)\mY,\mZ\RP\right\}=0 \quad \Rightarrow \quad \mX(\alpha)=0.
\end{align} 
For the second term, the mapping $K$ defines a projection onto the subspace spanned by Killing vector fields $\mathcal K_{\alpha}$, i.e. $K(\mX)=\gamma^{\alpha\beta}\LP \mX,\mathcal K_{\alpha}\RP \mathcal K_{\beta}$
\begin{align}
\underset{\mX\leftrightarrow \mY\leftrightarrow \mZ}{{\rm Sym}}\left\{2\gamma^{\alpha\beta}\LP \mY,\LCM_\mX\mathcal K_{\alpha}\RP\LP  \mathcal K_{\beta},\mZ\RP\right\}+\underset{\mX\leftrightarrow \mY\leftrightarrow \mZ}{{\rm Sym}}\left\{\mX\left(\gamma^{\alpha\beta}\right)\LP \mY,\mathcal K_{\alpha}\RP\LP  \mathcal K_{\beta},\mZ\RP\right\}=0 \quad \Rightarrow \quad \mX(\gamma^{\alpha\beta})=0,
\end{align} 
where we used Killing equations again.
\end{proof}

Note that the trivial Killing mapping does not give new conservation laws. Indeed, in this case
\begin{align}
\mathcal Q_{K}(\TG,\TG)=\alpha \LP\TG,\TG\RP+\sum^n_{\alpha,\beta=1}\gamma^{\alpha\beta}\mathcal Q_{\mathcal K_\alpha}(\TG)\mathcal Q_{\mathcal K_\beta}(\TG),
\end{align} 
and hence the conserved quantity $\mathcal Q_{K}(\TG,\TG)$ is nothing else than a combination of the Killing vectors and a constant coefficient. However, one can show the existence of manifolds with nontrivial Killing tensors, which are not associated with the manifold isometries directly.

\section{Foliation lift}
\label{sec:foliation}

\begin{definition}
A {\em hypersurface} $S$ is an image of isometric embedding $i:S\rightarrow M$ of $m-1$ and $m$ dimensional pseudo Riemannian manifolds $S$ and $M$ respectively.
\end{definition}

A timelike/spacelike hypersurface have a unique unit normal vector field $\xi$ ($\LP \xi, \xi \RP\equiv\epsilon=\pm1$), while the second fundamental form $\sigma( \;\cdot\; ,\;\cdot\;):TS\times TS\rightarrow \mathbb R$ and the mean curvature are defined as follows
\begin{align}
\sigma(X,Y)\equiv \epsilon\LP \LCM_XY,\xi\RP, \quad X,Y\in TS, \quad  h={\rm Tr}(\sigma)/(m-1).
\label{SFF1}
\end{align} 
In this notation Gauss decomposition reads
\begin{align}
&\LCM_{X}Y=\LCS_XY+\sigma(X,Y)\cdot\xi,
\end{align} 
where $\LCS$ is the Levi-Civita connection in $S$ (see Ref. \cite{Chen} for details). 

\begin{definition} 
A {\em foliation} $\mF_\Omega$ of the manifold $M$ (with codimension one) parameterized by $\Omega\in\mathbb{R}$ with the lapse function $\varphi$ is a smooth family of hypersurfaces $S_\Omega$ (slices), whose union is $M$, which  satisfies the flow equation
\begin{align}
\frac {d \mathcal F_\Omega} {d \Omega}=\varphi \xi \quad \text{and} \quad  \frac {d\xi} {d \Omega}= - \epsilon\cdot \LCS \varphi,
\label{fl1}
\end{align}
where $\xi$ is a unit spacelike/timelike vector field normal to each slice $S_\Omega$.
\end{definition} 

\begin{proposition}\label{prop:projected_killing_equations}
Let $\mK$ be a Killing vector field on the manifold $M$, foliated by $\mF_\Omega$, with the normal $k_N\xi$ and tangent $\mK_\Omega\in TS_\Omega$ components, viz. $\mK=\mK_\Omega+k_N\xi$. Then, from the Killing equations (\ref{eq:killing_equation}) and the definitions (\ref{SFF1}),  (\ref{fl1}) it follows:
\begin{subequations}
\begin{align}
    &
      \underset{X \leftrightarrow Y}{{\rm Sym}}\left\{\LP \LCS_X \mathcal K_\Omega, Y\RP\right\}
    = 2\epsilon k_N\cdot{}^\Omega\sigma (X,Y), \label{eq:kkv19a}
    \\ &
      \LP\LCM_\xi\mathcal K_{\Omega}, X\RP
    = \epsilon\cdot\left(
          k_N \cdot \LCM_X\ln \varphi- X (k_N)
        - {}^\Omega\sigma (X,\mathcal K_{\Omega})
         \right),\label{eq:kkv19b}
    \\ &
      \LP \LCM_\xi \mathcal K_{\Omega}, \xi\RP
    = \epsilon\cdot\LCM_{\mathcal K_{\Omega}}\ln \varphi
    = -\epsilon\cdot\xi (k_N), \label{eq:kkv19c}
\end{align}
\end{subequations}
where $X,Y\in TS_{\Omega}$.
\end{proposition}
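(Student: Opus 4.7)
The plan is to substitute the decomposition $\mK = \mK_\Omega + k_N \xi$ into the Killing equation (\ref{eq:killing_equation}) and split the resulting identity into the three independent cases determined by whether each of the two test vectors lies in $TS_\Omega$ or along $\xi$: both tangent will produce (\ref{eq:kkv19a}), one tangent and one normal will produce (\ref{eq:kkv19b}), and both normal will produce (\ref{eq:kkv19c}). Every reduction rests on the Gauss decomposition $\LCM_X Y = \LCS_X Y + {}^\Omega\sigma(X,Y)\,\xi$, the orthogonality $\LP \xi, X \RP = 0$, and one auxiliary identity coming from the flow equation.

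First I would extract this auxiliary identity, namely $\LCM_\xi \xi = -\epsilon\,\LCS \ln \varphi$, by reading the second equation of (\ref{fl1}) as covariant differentiation along $\partial_\Omega = \varphi \xi$ and dividing through by $\varphi$. This gives $\LP \LCM_\xi \xi, X \RP = -\epsilon\,\LCM_X \ln \varphi$ for $X \in TS_\Omega$, and in particular $\LP \LCM_\xi \xi, \xi \RP = 0$, which is consistent with $\xi$ being unit. A second preliminary identity, $\LP \LCM_X \xi, Y \RP = -\epsilon\,{}^\Omega\sigma(X,Y)$, follows from differentiating $\LP \xi, Y \RP = 0$ along $X$ and applying the definition (\ref{SFF1}). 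With these two building blocks in hand, the three cases are straightforward. For (\ref{eq:kkv19a}) I project $\LCM_X \mK$ onto $Y \in TS_\Omega$: the $\LCM_X \mK_\Omega$ piece contributes $\LP \LCS_X \mK_\Omega, Y \RP$ by Gauss, and the $\LCM_X(k_N \xi)$ piece contributes $-\epsilon k_N\,{}^\Omega\sigma(X,Y)$, after which symmetrization over $X \leftrightarrow Y$ closes the identity. For (\ref{eq:kkv19b}) I compute $\LP \LCM_X \mK, \xi \RP$ and $\LP \LCM_\xi \mK, X \RP$ separately using Leibniz and both auxiliary identities, sum them to zero via the Killing equation, and solve for $\LP \LCM_\xi \mK_\Omega, X \RP$. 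For (\ref{eq:kkv19c}) the choice $X = Y = \xi$ collapses (\ref{eq:killing_equation}) to $\LP \LCM_\xi \mK, \xi \RP = 0$; expanding $\mK = \mK_\Omega + k_N\xi$ and using $\LP \LCM_\xi \xi, \xi \RP = 0$ yields the second equality, while the first equality drops out of differentiating $\LP \mK_\Omega, \xi \RP = 0$ along $\xi$ and substituting the identity for $\LCM_\xi \xi$.

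The only delicate point I anticipate is pinning down the precise meaning of $d/d\Omega$ in (\ref{fl1}) so that $\LCM_\xi \xi = -\epsilon \LCS \ln \varphi$ is legitimately recovered; once this is fixed, the rest is a disciplined application of Leibniz together with the symmetrization of the Killing equation, with the Lorentzian factors of $\epsilon$ being the only genuine source of bookkeeping error. No higher-order curvature identities or integrability conditions are needed.
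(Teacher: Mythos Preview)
Your proposal is correct and follows exactly the strategy the paper employs: the paper does not spell out a separate proof of this proposition, but the analogous Killing-tensor proposition is proved in Appendix~\ref{sec:Proposition} by precisely the same decomposition---projecting the Killing equation onto the three cases (tangent--tangent, tangent--normal, normal--normal), using the Gauss decomposition, the identity $\LP\LCM_X\xi,Y\RP=-\epsilon\,{}^\Omega\sigma(X,Y)$, and $\LCM_\xi\xi=-\epsilon\,\LCS\ln\varphi$ from the flow equation. Your identification of the auxiliary identity $\LCM_\xi\xi=-\epsilon\,\LCS\ln\varphi$ as the only point requiring care is also consistent with how the paper uses it (see, e.g., the manipulation $\LP X,\LCM_\xi\xi\RP=-\LCM_X\ln\varphi$ in Eq.~(\ref{eq:Xxi})).
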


As a consequence of the Proposition \ref{prop:projected_killing_equations}, the Killing vector field in $M$, tangent to all the foliation slices $S_\Omega$, is also a Killing vector field in $S_\Omega$. In the general case, the projection $\mathcal K_{\Omega}$ is not a Killing vector in the slices of foliation since the right side of the equation (\ref{eq:kkv19a}) does not vanish. An exception is the case of totally umbilical or totally geodesic slice, where the projection of any Killing field is a conformal or ordinary Killing vector field respectively. Such slices arise if the field generating the foliation is a (conformal) Killing field and/or the spacetime has the structure of a warped/twisted product \cite{Chen}. As we will see further, the case of Killing tensors is more intricate.

Assuming that the tangent component $\mathcal{K}_\Omega$ forms the Killing vector in all slices $S_\Omega$, the left side of the equation (\ref{eq:kkv19a}) is equal to zero, so either ${}^\Omega\sigma(X,Y)$ or $k_N$ must also be zero. In the first case ${}^\Omega\sigma(X,Y)=0$ slices are totally geodesic hypersurfaces, and the equations (\ref{eq:kkv19b}), (\ref{eq:kkv19c}) reduce to
\begin{equation}
      \LP \LCM_\xi \mathcal K_{\Omega}, X\RP
    = \epsilon\cdot\left(
          k_N \cdot \LCM_X\ln \varphi- X (k_N)
          \right),\qquad
      \LCM_{\mathcal K_{\Omega}}\ln \varphi = -\xi (k_N).
\end{equation}
In the second case of a trivial normal component $k_N=0$, these equations are
\begin{equation}\label{eq:kkv2}
      \LP \LCM_\xi \mathcal K_{\Omega} , X\RP
    = -\epsilon\cdot{}^\Omega\sigma (X,\mathcal K_{\Omega}),
    \qquad
    \LCM_{\mathcal K_{\Omega}} \varphi =0.
\end{equation}
Thus, the generation of the Killing vectors in $M$ from $\mathcal{K}_\Omega$ with a nontrivial normal component $k_N$ is possible in the case of the totally geodesic slices only.

\begin{proposition} \label{prop:killing_map}
Let $K$ be a Killing mapping on $M$, such that $K(\; \cdot\;)=K^{(\; \cdot\;)}_\Omega+k^{(\; \cdot\;)}_N\xi$, where $k^{(\; \cdot\;)}_N\xi$ is a normal component and $K^{(\; \cdot\;)}_\Omega\in TS_\Omega$ is a tangent component. Then, the Killing equations (\ref{eq:killing_mapping_equation}) split into the following parts
\begin{subequations}\label{eq:killing_map_xy}
\begin{align}
    & 
      \underset{X \leftrightarrow Y\leftrightarrow Z}{{\rm Sym}}\left\{\LP\LCS_X K^Z_\Omega,Y\RP\right\}
    = \underset{X \leftrightarrow Y\leftrightarrow Z}{{\rm Sym}}\left\{2 \epsilon k^Z_N\cdot{}^\Omega\sigma(X,Y)\right\},
    \\ &
      \LP\LCM_\xi ( K^\xi_\Omega),\xi\RP
    = \epsilon\LCM_{K^\xi_{\Omega}}\ln \varphi
    = -\frac{1}{2}\epsilon\xi(k^\xi_N),
    \\ &
      \LP\LCM_\xi (K^\xi_{\Omega}),X\RP
    = \epsilon \left(
          k^\xi_N\LCM_X\ln \varphi
        - \LCM_{K^X_{\Omega}}\ln\varphi
        - \frac{1}{2} X(k^\xi_N)
        - {}^\Omega\sigma(X,K^\xi_{\Omega})
    \right),
    \\ &
      \LP\LCM_X (K^\xi_{\Omega}),\xi\RP
    = \epsilon\cdot{}^\Omega\sigma(X,K^\xi_{\Omega}),
    \\ &
      \underset{X \leftrightarrow Y}{{\rm Sym}}\{
          \LP\LCM_X (K^\xi_{\Omega}),Y\RP
        + \frac{1}{2} \LP\LCM_\xi K^X_{\Omega},Y\RP
        + \epsilon \left(
              {}^\Omega\sigma(X,K^Y_{\Omega})
            - k^X_N\LCM_Y\ln \varphi
            - k^\xi_N\cdot{}^\Omega\sigma(X,Y)
        \right)
    \} = 0
\end{align}
\end{subequations}
and
\begin{align}
    \LP K^X_\Omega,Y\RP = \LP X, K^Y_\Omega \RP,\quad
    k^X_N = \epsilon\LP X,K^\xi_{\Omega}\RP,\quad
    \LCM_\xi K^X_{\Omega} = \LCM_\xi (K^X_{\Omega}) - K_{\Omega}((\LCM_\xi X)^{||}_\Omega).
\end{align}
\end{proposition}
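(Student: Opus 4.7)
The plan is to substitute the decomposition $K(\cdot)=K^{(\cdot)}_\Omega+k^{(\cdot)}_N\xi$ into the Killing equation (\ref{eq:killing_mapping_equation}), reading $\LCM_\mX K(\mZ)$ as the tensor derivative $(\LCM_\mX K)(\mZ)=\LCM_\mX(K(\mZ))-K(\LCM_\mX\mZ)$, and to project the resulting identity onto $TS_\Omega$ and $\xi$. The three relations collected after ``and'' are settled first: $\LP K^X_\Omega,Y\RP=\LP X,K^Y_\Omega\RP$ and $k^X_N=\epsilon\LP X,K^\xi_\Omega\RP$ are the tangent--tangent and tangent--normal projections of the self-adjointness $\LP K(\mX),\mY\RP=\LP\mX,K(\mY)\RP$, while $\LCM_\xi K^X_\Omega=\LCM_\xi(K^X_\Omega)-K_\Omega((\LCM_\xi X)^{||}_\Omega)$ is the Leibniz rule that defines the covariant derivative of the slice operator $K_\Omega$ along $\xi$ applied to $X$, separating it from the plain derivative of the vector field $K^X_\Omega$. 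The analogous tangent Leibniz rule $\LCS_X K^Z_\Omega=\LCS_X(K^Z_\Omega)-K_\Omega(\LCS_X Z)$ will be needed as well.

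The geometric tools for the expansion are Gauss--Weingarten, $\LCM_X Y=\LCS_X Y+{}^\Omega\sigma(X,Y)\xi$ and $\LP\LCM_X\xi,Y\RP=-\epsilon\,{}^\Omega\sigma(X,Y)$ for $X,Y\in TS_\Omega$, together with $\LCM_\xi\xi=-\epsilon\LCS\ln\varphi$ obtained from the flow equation (\ref{fl1}). Equation (d) is then pure Gauss, $\LP\LCM_X(K^\xi_\Omega),\xi\RP=\epsilon\,{}^\Omega\sigma(X,K^\xi_\Omega)$, independent of the Killing property.

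Equations (a), (b), (c), (e) are the four independent projections of the 3-index symmetrized Killing identity, classified by the number of arguments set equal to $\xi$ (zero, three, two, one, respectively). In each case I would expand $(\LCM_\mX K)(\mZ)$ using the decomposition and the kinematic rules above, contract with $\mY$, and sum the six permutations; self-adjointness of $\LCM_\mX K$ (inherited from $K$) pairs the permutations, leaving three distinct structures. The fully-tangent case yields (a) after absorbing the $K_\Omega(\LCS_X Z)$ piece via the tangent Leibniz rule and collecting Weingarten contributions (the two $\sigma$ terms combine into $2\epsilon k^Z_N{}^\Omega\sigma(X,Y)$ under Sym). The fully-normal case collapses to $\LP(\LCM_\xi K)(\xi),\xi\RP=0$ which, combined with $\LP\LCM_\xi(K^\xi_\Omega),\xi\RP=\epsilon\LCM_{K^\xi_\Omega}\ln\varphi$ (from differentiating $\LP K^\xi_\Omega,\xi\RP=0$ along $\xi$), gives (b). The two-normal case reduces to $\LP(\LCM_X K)(\xi),\xi\RP+2\LP(\LCM_\xi K)(\xi),X\RP=0$, which upon solving for $\LP\LCM_\xi(K^\xi_\Omega),X\RP$ produces (c). Finally, the one-normal case produces the symmetric $(X,Y)$ identity (e), with the $\tfrac12$ coefficient arising from writing $\LP(\LCM_\xi K)(X),Y\RP=\LP(\LCM_\xi K)(Y),X\RP$ as $\tfrac12\,\mathrm{Sym}$ and converting $\LCM_\xi(K^X_\Omega)$ to $\LCM_\xi K^X_\Omega$ via the third auxiliary identity.

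The main obstacle is the bookkeeping in the one-$\xi$ case: the Leibniz correction $-K(\LCM_X Y)$ generates a tangential contribution $-\LP\LCS_X Y,K^\xi_\Omega\RP$ which must cancel against the corresponding piece from $\epsilon X(k^Y_N)=\LP\LCS_X Y,K^\xi_\Omega\RP+\LP Y,\LCS_X K^\xi_\Omega\RP$, leaving only the clean ${}^\Omega\sigma$ and $\LCM\ln\varphi$ structure of (e). Self-adjointness of $\LCM_\xi K_\Omega$, which must be verified from the third auxiliary identity and self-adjointness of $K_\Omega$, is what makes this cancellation consistent with the $\tfrac12$ symmetrization.
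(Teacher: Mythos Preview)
Your proposal is correct and follows essentially the same approach as the paper's own proof in Appendix~\ref{sec:Proposition}: both split the symmetrized Killing equation according to how many of the three slots are tangent versus normal to $S_\Omega$, expand $K(\cdot)=K^{(\cdot)}_\Omega+k^{(\cdot)}_N\xi$, and simplify using the Gauss decomposition, the Weingarten relation, and $\LCM_\xi\xi=-\epsilon\LCS\ln\varphi$. Your observation that equation~(d) is a direct consequence of the definition of ${}^\Omega\sigma$ (not of the Killing condition) is accurate and slightly more explicit than the paper, which folds this identity silently into the computations; otherwise the structure and the key cancellations you describe match the appendix line by line.
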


\begin{proof}
See Appendix \ref{sec:Proposition}.
\end{proof}
 
Similar to Killing vectors, in the case of totally geodesic slices, one can lift the Killing tensor from the slice to the whole manifold and obtain a nontrivial normal component $k_N^{(\;\cdot\;)}$. This particular case of totally geodesic slices was considered, for example, in the reference \cite{Garfinkle:2010er}. Moreover, if we consider the conformal Killing tensors, a similar technique can be applied in the warped spacetimes \cite{Krtous:2015ona}, where the foliation slices are totally umbilical \cite{Chen}. In this paper, we consider the Killing tensor lift technique for arbitrary slices (not totally geodesic submanifolds). In this case, the second fundamental form ${}^\Omega \sigma$ is not trivial, and the equation (\ref{eq:killing_map_xy}) implies $k^X_N=0$, $K^X_{\Omega}=0$. Then, the family of Killing mappings $K_{\Omega}:TS_{\Omega}\rightarrow TS_{\Omega}$ can be lifted from the slices to the Killing mapping $K(\; \cdot\;)=K^{(\; \cdot\;)}_\Omega+k^{(\; \cdot\;)}_N\xi$ in the manifold $M$ with nontrivial normal components, if the following equations hold
\begin{subequations}
\begin{align}
&k^X_N=0, \quad K^\xi_{\Omega}=0,\quad \xi(k^\xi_N)=0,\quad  X(k^\xi_N)=2 k^\xi_N\LCM_X\ln \varphi-2\LCM_{K^X_{\Omega}}\ln\varphi,\label{eq:kkv1a}\\&\underset{X \leftrightarrow Y}{{\rm Sym}}\{\,\frac{1}{2}
\cdot\LP\LCM_\xi K^X_{\Omega},Y\RP+ \epsilon\cdot {}^\Omega\sigma(X,K^Y_{\Omega})-\epsilon k^\xi_N\cdot{}^\Omega\sigma(X,Y)\}=0.\label{eq:kkv1b}
\end{align}\label{eq:kkv1}
\end{subequations}

\section{Generation of a non-trivial Killing tensor}
\label{sec:generation}

Suppose that the manifold $M$ has a collection  $n\leq m-2$ of linearly independent Killing vector fields $\mK_\alpha$ tangent to the slices $S_\Omega$ of the foliation $\mF_\Omega$, i.e.
\begin{align}
\LP\mK_\alpha,\xi\RP=0,\quad \LP[\mK_\alpha,\mK_\beta],\xi\RP=0.
\end{align}
Then, such vectors $\mK_\alpha$ are also Killing vectors in the slices $S_\Omega$, and a trivial Killing mapping of the form (\ref{eq:trivial_tensor}) is always defined. Substituting this mapping into equations (\ref{eq:kkv1a}), (\ref{eq:kkv1b}) and using equations (\ref{eq:kkv2}), we obtain
\begin{subequations}
\begin{align}
    &
    X(k^\xi_N)=2 (k^\xi_N-\alpha)\LCM_X\ln \varphi, \quad
    \xi(k^\xi_N)=0, \quad
    X(\alpha)=0, \quad
    X(\gamma^{\alpha\beta})=0,\label{eq:kkv3a}
    \\&
    \label{eq:map_evolution_2}
    2\epsilon(k^\xi_N-\alpha)\cdot{}^\Omega\sigma(X,Y)=\xi(\alpha)\LP X,Y\RP
+\sum^n_{\alpha,\beta=1}\xi(\gamma^{\alpha\beta})\LP X,\mathcal K_\alpha\RP\LP\mathcal K_\beta,Y\RP, 
\end{align}
\label{eq:kkv3b}
\end{subequations}
for any $X,Y\in TS_\Omega$. There is always a trivial solution for these equations
\begin{align}
k^\xi_N=\alpha, \quad \xi(\alpha)=0, \quad \xi(\gamma^{\alpha\beta})=0,
\end{align}
corresponding to the trivial Killing tensor in $M$.  However, in some cases it can also have nontrivial solutions, which corresponds to the nontrivial Killing tensor and new conservation laws. Next, we will try to determine the necessary and sufficient conditions for the integrability of these equations and describe the methods for finding the explicit form of their solutions. 

Let us additionally assume that the Gramian matrix $\mG_{\alpha\beta}=\LP\mK_\alpha,\mK_\beta\RP$ is not degenerate. Then, we can introduce a basis $\{\mK_\alpha,e_a\}$ in $S_\Omega$ in such a way that $e_a\in\{\mK_\alpha\}^{\perp}$ with $a=1,\ldots,m-n-1$. In this basis, equation (\ref{eq:map_evolution_2}) can be cast as the following matrix system
\begin{align}
{}^\Omega\Sigma_{ab}=0,\quad {}^\Omega\Sigma_{\alpha a}=0,  \quad {}^\Omega\Sigma_{\alpha\beta}=\sum^n_{\sigma,\rho=1}\mG_{\alpha\sigma}\mG_{\beta\rho}\cdot\xi(\gamma^{\sigma\rho}),
\label{eq:kkv4}
\end{align}
where we have introduced the umbilic operator
\begin{align}
{}^\Omega\Sigma(X,Y)=2\epsilon(k^\xi_N-\alpha)\cdot{}^\Omega\sigma(X,Y)-\xi(\alpha)\LP X,Y\RP.
\end{align}
The equation ${}^\Omega\Sigma_{ab}=0$ implies that all foliation slices must be partially umbilic 
\begin{align}\label{eq:umbilic_condition}
{}^\Omega\sigma(X,Y)=h^{\Omega}\LP X,Y\RP, \quad X,Y\in\{e_a\},
\end{align}
and it is solved by
\begin{align}
k^\xi_N=\alpha+\epsilon\frac{\xi(\alpha)}{2 h^{\Omega}}.
\label{K3s}
\end{align}
In the case of strict equality $n = m-2$ the equation (\ref{eq:umbilic_condition}) is not a condition for a partial umbilic hypersurface, but only imposes a simple algebraic equation on $h^\Omega$.
Further, we will use the following relation for any arbitrary function $\omega$, constants in each slice $X\omega=0$ 
\begin{align} \label{eq:Xxi}
X\xi(\omega)=[X,\xi](\omega)=\epsilon \LP\LCM_X \xi-\LCM_\xi X,\xi\RP \xi(\omega)=\epsilon \LP X,\LCM_\xi\xi\RP\xi(\omega)=-\LCM_X \ln \varphi \cdot\xi(\omega).
\end{align}
Applying this formula for $\alpha$ and substituting (\ref{K3s}) into the first equation from (\ref{eq:kkv3a}), we find the first integrability condition
\begin{align}
X(h^{\Omega}\cdot\varphi^3) = 0.  
\label{K1}
\end{align}
Plugging (\ref{K3s}) into the second equation from (\ref{eq:kkv3a}), we can completely integrate the function $\alpha$
\begin{align}
\xi\ln\xi(\alpha)=\xi \ln h^{\Omega}-2\epsilon h^{\Omega},
\label{K2}
\end{align}
keeping in mind one more condition from (\ref{eq:kkv3a})
\begin{equation}\label{K2_cond}
    \quad X(\alpha)=0.
\end{equation}
On the other hand, using the nondegeneracy and invertibility of the Gramian matrix, one can express $\xi(\gamma^{\alpha\beta})$ from the last equation of (\ref{eq:kkv4}) as follows:
\begin{align}
\xi(\gamma^{\alpha\beta})=\sum^n_{\sigma,\rho=1}\mG^{\alpha\sigma}\mG^{\beta\rho}\cdot{}^\Omega\Sigma_{\sigma\rho}.
\label{eq:kkv5}
\end{align}

\begin{proposition} \label{prop:second_form}
Let all Killing vector fields  $\mathcal K_\alpha$ touch all slices of the foliation $S_\Omega$. Then for $\xi\in TS^{\perp}_\Omega$ 
\begin{align}
{}^\Omega\sigma(\mathcal K_\alpha,\mathcal K_\beta)=-\frac{\epsilon}{2}\cdot \xi\mG_{\alpha\beta}.
\end{align} 
\end{proposition}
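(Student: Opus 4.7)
The plan is to prove the identity by computing $\xi\mG_{\alpha\beta}$ directly and then comparing with the definition of the second fundamental form, using the Killing equation as the bridge between normal derivatives and tangential ones.

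First, I would use the metric compatibility of $\LCM$ to expand
\begin{align*}
\xi\mG_{\alpha\beta}=\xi\LP\mK_\alpha,\mK_\beta\RP=\LP\LCM_\xi\mK_\alpha,\mK_\beta\RP+\LP\mK_\alpha,\LCM_\xi\mK_\beta\RP.
\end{align*}
Next I would apply the Killing equation (\ref{eq:killing_equation}) to each $\mK_\alpha$ with the pair $(\xi,\mK_\beta)$, and likewise for $\mK_\beta$ with $(\xi,\mK_\alpha)$. This gives
\begin{align*}
\LP\LCM_\xi\mK_\alpha,\mK_\beta\RP=-\LP\LCM_{\mK_\beta}\mK_\alpha,\xi\RP,\qquad \LP\LCM_\xi\mK_\beta,\mK_\alpha\RP=-\LP\LCM_{\mK_\alpha}\mK_\beta,\xi\RP,
\end{align*}
so the normal derivative of the Gramian is traded for two tangential objects paired against $\xi$.

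Finally I would invoke the definition (\ref{SFF1}) of the second fundamental form, which yields $\LP\LCM_{\mK_\alpha}\mK_\beta,\xi\RP=\epsilon\cdot{}^\Omega\sigma(\mK_\alpha,\mK_\beta)$, together with the symmetry of $\sigma$, to combine both contributions into a single term. Using $\epsilon^2=1$ then rearranges to the claim
\begin{align*}
{}^\Omega\sigma(\mK_\alpha,\mK_\beta)=-\tfrac{\epsilon}{2}\,\xi\mG_{\alpha\beta}.
\end{align*}

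There is no genuine obstacle here: the argument is two lines once one recognises that the Killing property converts the transverse derivative of $\LP\mK_\alpha,\mK_\beta\RP$ into the symmetrised tangential derivative that precisely defines ${}^\Omega\sigma$. The only mild subtlety is ensuring the tangency assumption $\LP\mK_\alpha,\xi\RP=0$ is actually used --- and it is, implicitly, in discarding a would-be term $\xi\LP\mK_\alpha,\xi\RP\cdot(\ldots)$ when interpreting $\LCM_{\mK_\alpha}\mK_\beta$ against $\xi$, and more importantly in guaranteeing that ${}^\Omega\sigma(\mK_\alpha,\mK_\beta)$ is well-defined since $\mK_\alpha,\mK_\beta\in TS_\Omega$.
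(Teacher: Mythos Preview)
Your argument is correct and is essentially the same as the paper's: both use metric compatibility plus the Killing equation to relate $\xi\mG_{\alpha\beta}$ to $\LP\LCM_{\mK_\alpha}\mK_\beta,\xi\RP+\LP\LCM_{\mK_\beta}\mK_\alpha,\xi\RP$, and then identify this with $2\epsilon\,{}^\Omega\sigma(\mK_\alpha,\mK_\beta)$. The only cosmetic difference is that the paper makes the involutivity condition $\LP[\mK_\alpha,\mK_\beta],\xi\RP=0$ explicit at that last step, whereas you invoke it implicitly through the symmetry of $\sigma$; these are of course the same statement.
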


\begin{proof}
Using the equation (\ref{eq:killing_equation}), we find
\begin{align}
      \LP\LCM_{\mathcal K_\alpha}\mathcal K_\beta,\xi\RP
    + \LP\LCM_{\mathcal K_\beta}\mathcal K_\alpha,\xi\RP 
    =
    -\LP\LCM_{\xi}\mathcal K_\alpha,\mathcal K_\beta\RP
    -\LP\LCM_{\xi}\mathcal K_\beta,\mathcal K_\alpha\RP
    =
    -\LCM_{\xi}\LP\mathcal K_\alpha,\mathcal K_\beta\RP.
\end{align} 
Since $\mathcal K_\alpha$ are tangent vector fields, they satisfy involutivity condition $\LP[\mathcal K_\alpha,\mathcal K_\beta],\xi\RP = 0$, and we get
\begin{align}
{}^\Omega\sigma(\mathcal K_\alpha,\mathcal K_\beta)\equiv\epsilon\LP\LCM_{\mathcal K_\alpha}\mathcal K_\beta,\xi\RP=-\frac{\epsilon}{2}\cdot\LCM_{\xi}\LP\mathcal K_\alpha,\mathcal K_\beta\RP.
\end{align} 
\end{proof}

The components ${}^\Omega\sigma(\mathcal K_\alpha,e_a)$ are equal to zero due to the equation (\ref{eq:map_evolution_2}). As a result, by combining the condition (\ref{eq:umbilic_condition}) and the Proposition \ref{prop:second_form}, the second fundamental form ${}^{\Omega}\sigma$ will be block-diagonal in the basis $\{\mK_\alpha,e_a\}$
\begin{align}\label{eq:second_form_explicit}
{}^\Omega\sigma=\begin{pmatrix}
-\frac{1}{2} \epsilon \cdot \xi\mG_{\alpha\beta} & 0\\
0 & h^\Omega \cdot  \LP e_a,e_b\RP  & 
\end{pmatrix}
\end{align}
with a mean curvature
\begin{align}
    h \equiv
    \frac{1}{\text{dim} S_\Omega} \text{Tr} {}^\Omega\sigma
    =
    - \frac{\epsilon}{2(m-1)} \cdot \xi\ln\mG
    + \frac{m-n-1}{m-1} h^\Omega.
\end{align}
Note that exactly the same structure of the second fundamental form is characteristic of partially umbilical surfaces in \cite{Kobialko:2020vqf}, which represent a generalization of photon spheres for rotating spacetime \cite{Claudel:2000yi,Teo,Galtsov:2019fzq,Galtsov:2019bty}.     

After we have established the second fundamental form in (\ref{eq:second_form_explicit}) and the $k^\xi_N$  component  in (\ref{K3s}), we can write  the umbilic operator 
\begin{align}
    {}^\Omega\Sigma_{\alpha\beta}
    =
    - \xi(\alpha) \cdot \left(
          \mG_{\alpha\beta}
        + \frac{\epsilon}{2 h^{\Omega}} \cdot \xi\mG_{\alpha\beta}
    \right).
 \label{kkv12} 
\end{align} 
Plugging this and taking into account that  $\xi k^\xi_N=0$, we can recast the equation (\ref{eq:kkv5}) into the form
\begin{align} \label{eq:xi_gamma}
    \xi\left(
          \epsilon\frac{\xi(\alpha)}{2 h^{\Omega}} \cdot \mathcal G^{\alpha\beta}
        - \gamma^{\alpha\beta}
    \right)
    = 0,
\end{align}
where we used the identity $\mG^{\alpha\lambda}\cdot\xi\mG_{\lambda\beta}=-\mG_{\lambda\beta} \cdot\xi\mG^{\alpha\lambda}$. This gives us a solution for the matrix $\gamma^{\alpha\beta}$ in terms of the new matrix $\nu^{\alpha\beta}$, which is constant along the normal:
\begin{align} \label{K3n}
    \gamma^{\alpha\beta} = 
    \epsilon\frac{\xi(\alpha)}{2 h^{\Omega}} \cdot \mathcal G^{\alpha\beta}
    - \nu^{\alpha\beta},
\end{align}
with a condition
\begin{align} \label{K3n_cond}
    \xi(\nu^{\alpha\beta}) = 0.
\end{align}

Let us find the integrability conditions for $\gamma^{\alpha\beta}$ and $\alpha$. A necessary and sufficient integrability condition on $\gamma^{\alpha\beta}$ follows from the Frobenius theorem. Acting on the equation (\ref{eq:xi_gamma}) with $X$ and  the last equation in (\ref{eq:kkv3a}), we find the commutator
\begin{equation}
    [X,\xi] \gamma^{\alpha\beta} = X\xi\left(
          \epsilon\frac{\xi(\alpha)}{2 h^{\Omega}} \cdot \mathcal G^{\alpha\beta}
    \right).
\end{equation}
Using the equations (\ref{K2}) and (\ref{eq:Xxi}) applied to $\gamma^{\alpha\beta}$ and $\alpha$, we obtain the integrability condition\footnote{Note that the order of the derivatives in this condition can be lowered if the Ricci tensor is known. See Appendix \ref{sec:Matrix}.}
\begin{align} \label{eq:gamma_integrability}
  X \left(
    \mathcal G^{\alpha\beta}
    -
    \frac{\epsilon}{2 h^{\Omega}} \cdot \xi \mathcal G^{\alpha\beta}
    \right) = 0,
\end{align}
and its trace (with respect to $\mG_{\alpha\beta}$)
\begin{align}
X\left(\frac{\epsilon}{2 h^{\Omega}}\cdot \xi\ln\mG-\ln\mG\right)+\frac{\epsilon}{2 h^{\Omega}}\cdot \xi\mG_{\alpha\beta}\cdot X\mG^{\alpha\beta}=0.
\end{align}
Finally, a non-trivial Killing tensor can be generated using the technique from the following theorem:

\begin{theorem} 
\label{KBG}
Let the manifold $M$ contains a collection of $n\leq m-2$ Killing vector fields $\mK_\alpha$ with a non-degenerate Gramian $\mG_{\alpha\beta}=\LP\mK_\alpha,\mK_\beta\RP$, tangent to the partially umbilic slices $S_\Omega$ of the foliation $\mF_\Omega$. Then, there is a nontrivial Killing tensor on manifold $M$, if the following steps can be successfully completed:

{\bf Step one}: Check  compatibility and integrability conditions (\ref{K1}), (\ref{eq:gamma_integrability}).  

{\bf Step two}: Obtain $\alpha$ from (\ref{K2}) and check the condition (\ref{K2_cond}). 

{\bf Step three}: Define $\gamma^{\alpha\beta}$ from (\ref{K3n}) using the  conditions $\xi\nu^{\alpha\beta}=0$, $X\gamma^{\alpha\beta}=0$.

{\bf Step four}: Using the functions found in the previous steps and the equations (\ref{K3s}), construct a Killing map and the corresponding Killing tensor:
\begin{align}
K(\mX)=\alpha \left(\mX-\epsilon \LP\mX,\xi \RP\xi\right)+\sum^n_{\alpha,\beta=1}\gamma^{\alpha\beta}\LP \mX,\mathcal K_\alpha\RP\mathcal K_\beta+\epsilon k^\xi_N\LP\mX,\xi \RP\xi,
\end{align}
where it is taken into account that $\mathcal{X}$ needs to be projected onto the slice in the first two terms.
\end{theorem}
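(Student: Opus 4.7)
The plan is to verify that the mapping assembled in Step four of the theorem satisfies the lifting equations (\ref{eq:kkv1}) derived in Section \ref{sec:foliation}. The construction decomposes $K$ as a slice--tangential trivial Killing mapping (built from the $\mathcal{K}_\alpha$ via Proposition \ref{prop:trivial_tensor}) plus a purely normal component $k^\xi_N\xi$. First I would confirm that the tangential part is indeed a Killing mapping in each slice $S_\Omega$: by Proposition \ref{prop:trivial_tensor} this reduces to the requirements $X(\alpha)=0$ and $X(\gamma^{\alpha\beta})=0$, which are enforced in Steps two and three through (\ref{K2_cond}) and the construction of $\gamma^{\alpha\beta}$ from $\nu^{\alpha\beta}$ with $\xi\nu^{\alpha\beta}=0$.

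Next I would check the scalar/vector relations in (\ref{eq:kkv1a}). The conditions $K^\xi_\Omega=0$ and $k^X_N=0$ are built into the ansatz and follow automatically from $K^X_\Omega$ lying in the span $\{\mathcal{K}_\alpha\}\subset TS_\Omega$ combined with the symmetry relations of Proposition \ref{prop:killing_map}. The conditions $\xi(k^\xi_N)=0$ and the evolution equation for $X(k^\xi_N)$ become a direct verification once (\ref{K3s}) is substituted: applying (\ref{eq:Xxi}) to $\alpha$ and to $h^\Omega$ reduces the evolution equation to (\ref{K1}), while $\xi k^\xi_N=0$ is exactly the content of (\ref{K2}). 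Equation (\ref{eq:kkv1b}) is the main tensorial identity: decomposing $X,Y$ in the adapted basis $\{\mathcal{K}_\alpha,e_a\}$ and using the block structure (\ref{eq:second_form_explicit}) of ${}^\Omega\sigma$ splits it into three independent algebraic relations. The mixed $(\mathcal{K}_\alpha,e_a)$ block vanishes identically because both ${}^\Omega\sigma(\mathcal{K}_\alpha,e_a)$ and the corresponding components of $\LCM_\xi K^X_\Omega$ are zero; the $(e_a,e_b)$ block collapses to the definition (\ref{K3s}) of $k^\xi_N$ in terms of $\alpha$ and $h^\Omega$; and the $(\mathcal{K}_\alpha,\mathcal{K}_\beta)$ block reduces, via Proposition \ref{prop:second_form}, to the ODE (\ref{eq:xi_gamma}) whose solution is precisely (\ref{K3n}).

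The main obstacle, and the reason the theorem is phrased as a four--step procedure rather than as an unconditional statement, is showing that the ODEs in $\Omega$ for $\alpha$ and $\gamma^{\alpha\beta}$ admit simultaneous solutions that are actually constant along each slice, so that the lift is globally well defined rather than formal. This is where the integrability hypotheses of Step one enter. Condition (\ref{K1}) guarantees, through the commutator identity (\ref{eq:Xxi}), that $\xi(\alpha)$ depends only on $\Omega$, so integrating (\ref{K2}) along $\xi$ yields an $\alpha$ compatible with $X(\alpha)=0$, closing Step two. Condition (\ref{eq:gamma_integrability}) plays the analogous role for $\gamma^{\alpha\beta}$: computing $[X,\xi]\gamma^{\alpha\beta}$ from (\ref{eq:xi_gamma}) using (\ref{K2}) and (\ref{eq:Xxi}) produces exactly the expression appearing in (\ref{eq:gamma_integrability}), so its vanishing is the Frobenius condition for the existence of a $\gamma^{\alpha\beta}$ satisfying both the $\xi$--ODE and $X(\gamma^{\alpha\beta})=0$, closing Step three.

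Once these consistency checks are passed, the functions $\alpha$, $\gamma^{\alpha\beta}$ and $k^\xi_N$ extend smoothly over the foliated region, and the mapping assembled in Step four satisfies every component of (\ref{eq:kkv1}); by Proposition \ref{prop:killing_map} this is equivalent to the full Killing equation (\ref{eq:killing_mapping_equation}) on $M$. Non--triviality is guaranteed as soon as the resulting $\gamma^{\alpha\beta}$ is not constant, since a trivial Killing tensor in the sense of Proposition \ref{prop:trivial_tensor} requires $\xi(\gamma^{\alpha\beta})=0$, which by (\ref{kkv12}) would force $\xi(\alpha)=0$ and hence reduce (\ref{K3s}) to the trivial case $k^\xi_N=\alpha$.
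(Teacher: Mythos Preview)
Your proposal is correct and mirrors the paper's own derivation in Section \ref{sec:generation}: the theorem there is stated as the summary of that section, and the ``proof'' consists precisely of reducing the lift equations (\ref{eq:kkv1}) to the block system (\ref{eq:kkv3b})--(\ref{eq:kkv4}), solving for $k^\xi_N$, $\alpha$, and $\gamma^{\alpha\beta}$ via (\ref{K3s}), (\ref{K2}), (\ref{K3n}), and identifying (\ref{K1}) and (\ref{eq:gamma_integrability}) as the Frobenius integrability conditions through (\ref{eq:Xxi}). One small slip: it is not $\xi(\alpha)$ itself but $\xi(\alpha)\cdot\varphi$ that is constant along each slice (this is what (\ref{eq:Xxi}) gives for a slice--constant $\alpha$), though this does not affect your argument.
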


\section{Connection with photon submanifolds}
\label{sec:photon}

Consider the case of a manifold with two Killing vectors spanning a timelike surface ($\epsilon=+1$). Let us define a spacelike vector field $\hat{\rho} \in \{\mathcal{K}_\alpha\}$ with index $\alpha=0,1$ numbering components in the basis\footnote{Denote a vector in $\{\mathcal{K}_\alpha\}$ in indexless hat notation, for example $\hat{\rho} = \rho^\alpha \mathcal{K}_\alpha$.} $\{\mathcal{K}_\alpha\}$ with metric $\mathcal{G}_{\alpha\beta}$ 
\begin{equation}\label{FPS1}
 \rho^\alpha=(\rho,1), \quad  \mathcal{G}_{\alpha\beta}\rho^\alpha\rho^\beta > 0,
\end{equation}
which is supposed to have constant components. We will call this vector the generalized impact parameter. In addition, we will introduce a vector $\hat{\tau}$ in $\{\mathcal{K}_\alpha\}$ orthogonal to $\hat{\rho}$:
\begin{equation}
    \tau^\alpha =\mathcal{G}^{\alpha\lambda}\epsilon_{\lambda\beta}\rho^\beta, \qquad
    \LP\hat{\tau},\hat{\tau}\RP = - \LP\hat{\rho},\hat{\rho}\RP,\qquad
    \LP\hat{\tau},\hat{\rho}\RP = 0,
\end{equation}
where $\epsilon_{\lambda\beta}$ is the two-dimensional Levi-Civita tensor. By the definition of \cite{Kobialko:2020vqf}, a timelike hypersurface tangent to two Killing fields ($\mathcal K_\alpha\in TS_\Omega$) is {\it the fundamental photon surface}, if it is umbilical for all vectors from the orthogonal complement of the field $\hat{\rho}$ (we also require the compactness of its spatial section)
\begin{align} 
    \label{FPS2a}
    {}^\Omega\sigma(\hat{\tau},\hat{\tau}) =
    h^{\Omega}\LP \hat{\tau},\hat{\tau}\RP,
    \quad
    {}^\Omega\sigma(X,Y)=h^{\Omega}\LP X,Y\RP,
    \quad
    {}^\Omega\sigma(\hat{\tau},X) = 0,
\end{align}
for $\forall X,Y\in\{e_a\}$, where $\{e_a\}$ is a basis of the orthogonal complement to $\{\mathcal{K}_\alpha\}$ in the slice introduced in Sec. \ref{sec:generation}.

\begin{proposition} 
The fundamental photon hypersurface (see Ref. \cite{Kobialko:2020vqf}) is a partially umbilical surface with a second fundamental form of the form (\ref{eq:second_form_explicit}), with the following connection between $h^\Omega$ and $\mG_{\alpha\beta}$
\begin{align}
    \rho^\alpha \mathcal{M}_{\alpha\beta} \rho^\beta = 0,
    \qquad
    \mathcal{M}_{\alpha\beta} \equiv
      \frac{1}{2 h^{\Omega}}\cdot\xi\mG_{\alpha\beta}
    - \mG_{\alpha\beta}
    - \frac{1}{2 h^{\Omega}}\cdot\xi\ln \mathcal G\cdot\mG_{\alpha\beta}.
\label{FPS3}
\end{align}
\end{proposition}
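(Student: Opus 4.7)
The plan is to split the argument into two pieces: first confirming the block-diagonal structure (\ref{eq:second_form_explicit}) for the second fundamental form, and then deriving the scalar constraint $\rho^\alpha \mathcal{M}_{\alpha\beta} \rho^\beta = 0$ by matching two independent expressions for ${}^\Omega\sigma(\hat\tau, \hat\tau)$.

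For the first piece, the bottom-right block $h^\Omega \LP e_a, e_b\RP$ is precisely the umbilicity condition built into (\ref{FPS2a}) on the orthogonal complement $\{e_a\}$. The top-left block is furnished by Proposition \ref{prop:second_form}, which applies because both $\mathcal{K}_\alpha$ are Killing fields tangent to $S_\Omega$, giving ${}^\Omega\sigma(\mathcal K_\alpha, \mathcal K_\beta) = -\frac{1}{2}\epsilon \cdot \xi\mG_{\alpha\beta}$. The off-diagonal block ${}^\Omega\sigma(\hat\tau, e_a) = 0$ is part of the definition (\ref{FPS2a}), while ${}^\Omega\sigma(\hat\rho, e_a) = 0$ is secured by the partial umbilicity built into the fundamental photon surface definition of \cite{Kobialko:2020vqf}.

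For the second piece, I would apply Proposition \ref{prop:second_form} by $C^\infty$-bilinearity to the combination $\hat\tau = \tau^\alpha \mathcal K_\alpha$, yielding
\begin{align}
    {}^\Omega\sigma(\hat\tau, \hat\tau)
    = \tau^\alpha \tau^\beta \cdot {}^\Omega\sigma(\mathcal K_\alpha, \mathcal K_\beta)
    = -\frac{1}{2}\epsilon \cdot \tau^\alpha \tau^\beta \cdot \xi \mG_{\alpha\beta}.
\end{align}
Equating this with the umbilic condition ${}^\Omega\sigma(\hat\tau, \hat\tau) = h^\Omega \LP\hat\tau, \hat\tau\RP$ from (\ref{FPS2a}) and specializing to $\epsilon = +1$ produces the scalar identity $\tau^\alpha \tau^\beta \left(\frac{1}{2 h^\Omega} \xi \mG_{\alpha\beta} + \mG_{\alpha\beta}\right) = 0$.

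The final step is to convert this to a condition on $\rho^\alpha$. Since $\hat\rho$ and $\hat\tau$ are orthogonal inside the two-dimensional $\{\mathcal K_\alpha\}$ subspace with $\LP\hat\tau, \hat\tau\RP = -\LP\hat\rho, \hat\rho\RP$, the completeness relation gives $\mG^{\alpha\beta} = (\rho^\alpha \rho^\beta - \tau^\alpha \tau^\beta)/\LP\hat\rho, \hat\rho\RP$, so $\tau^\alpha \tau^\beta = \rho^\alpha \rho^\beta - \LP\hat\rho, \hat\rho\RP \mG^{\alpha\beta}$. Substituting into the above, using the standard $2 \times 2$ identities $\mG^{\alpha\beta} \mG_{\alpha\beta} = 2$ and $\mG^{\alpha\beta} \xi\mG_{\alpha\beta} = \xi \ln \mathcal G$, and collecting the $\rho^\alpha \rho^\beta$ terms, reduces the relation to exactly $\rho^\alpha \mathcal{M}_{\alpha\beta} \rho^\beta = 0$. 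The main obstacle is the bookkeeping of sign conventions, particularly the relative sign between $\LP\hat\rho, \hat\rho\RP$ and $\LP\hat\tau, \hat\tau\RP$ and the factor of $\epsilon$ inherited from Proposition \ref{prop:second_form}; justifying the vanishing of the off-diagonal block ${}^\Omega\sigma(\hat\rho, e_a)$ may require the explicit fundamental photon surface definition from \cite{Kobialko:2020vqf} rather than only the identities (\ref{FPS2a}).
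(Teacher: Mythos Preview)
Your proposal is correct and reaches the same conclusion as the paper, but the conversion from $\hat\tau$ to $\hat\rho$ is done differently. The paper inserts the explicit definition $\tau^\alpha = \mG^{\alpha\lambda}\epsilon_{\lambda\beta}\rho^\beta$ directly into $-\tfrac{1}{2}\tau^\alpha\tau^\beta\,\xi\mG_{\alpha\beta}$ and then uses two-dimensional Levi--Civita identities (in particular, that $\epsilon_{\alpha\beta}$ carries a factor $\sqrt{\mathcal G}$, so $\xi\epsilon_{\alpha\beta}$ produces the $\xi\ln\mathcal G$ term) to obtain $\tfrac{1}{2}\rho^\alpha\rho^\beta(-\xi\mG_{\alpha\beta}+\mG_{\alpha\beta}\,\xi\ln\mathcal G)$ before equating with $h^\Omega\LP\hat\tau,\hat\tau\RP$. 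You instead first equate ${}^\Omega\sigma(\hat\tau,\hat\tau)$ with $h^\Omega\LP\hat\tau,\hat\tau\RP$ to get a condition in $\tau^\alpha\tau^\beta$, and then substitute the completeness relation $\tau^\alpha\tau^\beta=\rho^\alpha\rho^\beta-\LP\hat\rho,\hat\rho\RP\,\mG^{\alpha\beta}$ together with the trace identities $\mG^{\alpha\beta}\mG_{\alpha\beta}=2$ and $\mG^{\alpha\beta}\xi\mG_{\alpha\beta}=\xi\ln\mathcal G$. Both routes are equally short; yours has the minor advantage of bypassing the differentiation of the Levi--Civita tensor, while the paper's route makes the origin of the $\xi\ln\mathcal G$ term more explicit in terms of the chosen dual structure. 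Your remark that the off-diagonal block ${}^\Omega\sigma(\hat\rho,e_a)=0$ rests on the full definition in \cite{Kobialko:2020vqf} rather than on (\ref{FPS2a}) alone is accurate; the paper's proof does not revisit that block and treats the form (\ref{eq:second_form_explicit}) as already established by the surrounding discussion.
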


\begin{proof}
Using the Proposition \ref{prop:second_form}, the second fundamental form ${}^\Omega\sigma(\hat{\tau},\hat{\tau})$ reads
\begin{align}
    {}^\Omega\sigma(\hat{\tau},\hat{\tau}) & =
    -\frac{1}{2}\tau^\alpha \tau^\beta \xi\mG_{\alpha\beta} =
    - \rho^\alpha \rho^\beta \left(
      \frac{1}{2} \xi \mG_{\alpha\beta}
    + \mG^{\lambda\gamma} \epsilon_{\lambda\alpha}\xi\epsilon_{\gamma\beta}
    \right)=
    \frac{1}{2} \rho^\alpha \rho^\beta \left(
    - \xi \mG_{\alpha\beta}
    + \mG_{\alpha\beta} \xi \ln \mathcal{G}
    \right).
\end{align}
Substituting this expression into the first equation in (\ref{FPS2a}), we find (\ref{FPS3}).
\end{proof}
 
If the surface under consideration is totally umbilical $\mathcal{M}_{\alpha\beta}=0$, it is obviously a fundamental photon surfaces for any $\rho$. Since totally umbilical surfaces usually exist only in static spaces, and they have been considered in detail in a number of works \cite{Claudel:2000yi,Koga:2020akc}, we will focus on the case $\mathcal{M}_{\alpha\beta}\neq0$. 

Consider the foliation $\mF_\Omega$ generating a nontrivial Killing tensor in accordance with Theorem \ref{KBG}, and ask the question  whether its slice is a fundamental photon surface. First of all, we need to solve the quadratic equation (\ref{FPS3}) for $\rho$ and check the condition (\ref{FPS1}). It has nontrivial solution if the eigenvalues of the matrix $\mathcal{M}_{\alpha\beta}$ have different signs, that is $\det(\mathcal M)<0$. Then the solution for $\rho$ reads as
\begin{align}\label{PR1a}
    \rho = \frac{-\mathcal M_{01}\pm \sqrt{-\det(\mathcal M)}}{\mathcal M_{00}}.
\end{align}
Condition (\ref{FPS1}) is satisfied if the following inequality holds
\begin{align}\label{PR1b}
2(-\mathcal M_{0 1}\pm\sqrt{-\det(\mathcal M)})(\mathcal G_{01}-\mathcal G_{00}\mathcal M_{01}/\mathcal M_{00})/\mathcal M_{00}+(\mathcal G_{11}-\mathcal G_{00}\mathcal M_{11}/\mathcal M_{00})>0,
\end{align}
Equation (\ref{PR1b}) defines the so-called photon region \cite{Grenzebach,Grenzebach:2015oea}, which arises as a flow of fundamental photon surfaces \cite{Kobialko:2020vqf}. However, it has not been proven that the expression (\ref{PR1a}) for $\rho$ is constant in every slice. To understand better, let's act on Eq. (\ref{FPS3}) with $X$ 
\begin{align}
X\left\{\mathcal{M}_{\alpha\beta}\right\}\rho^\alpha\rho^\beta+2\mathcal{M}_{\alpha\beta}X(\rho^\alpha)\rho^\beta=0. 
\label{FPS4}
\end{align}
Consider the first term. We rewrite the second integrability condition with lowered indices  
\begin{align} 
  X \left( \frac{1}{2 h^{\Omega}} \cdot \xi \mathcal G_{\alpha\beta}\right) =X\mathcal G_{\alpha\beta}+ \frac{1}{2h^{\Omega}}\cdot\xi\mathcal G_{(\alpha\rho}\cdot X\mathcal G_{\lambda\beta)}\cdot\mathcal G^{\rho\lambda},
\end{align}
and plug it into the first term in (\ref{FPS4}) to get the following expression
\begin{align} \label{eq:XM1}
    &
    X \left\{\mathcal{M}_{\alpha\beta}\right\}\rho^\alpha\rho^\beta
    = \\\nonumber & =
    \left\{
          \frac{1}{h^{\Omega}}\cdot\xi\mathcal G_{\alpha\rho}\cdot X\mathcal G_{\lambda\beta}\cdot\mathcal G^{\rho\lambda}
        - X\ln\mathcal G\cdot\mG_{\alpha\beta}
        + \frac{1}{2h^{\Omega}}\cdot\xi\mathcal G_{\rho\lambda}\cdot X\mathcal G^{\rho\lambda}\cdot\mG_{\alpha\beta}
        - \frac{1}{2 h^{\Omega}}\cdot\xi\ln\mathcal G\cdot X\mG_{\alpha\beta}
    \right\}\rho^\alpha\rho^\beta,
\end{align}
where we have used $\xi \ln \mathcal G=\mathcal G^{\alpha\beta}\cdot\xi\mathcal G_{\alpha\beta}$. After subtracting the equation (\ref{FPS3}) multiplied by $X\ln\mathcal G$ from (\ref{eq:XM1}), the quantity $h{}^\Omega$ can be factored out
\begin{subequations}
\begin{align}
    X\left\{\mathcal{M}_{\alpha\beta}\right\}\rho^\alpha\rho^\beta
    \overset{\text{Eq. (\ref{FPS3})}}{=}
    \frac{1}{2 h^{\Omega}} \cdot \rho^\beta \mathcal{N}_{\beta\alpha} \rho^\alpha,
\end{align}
\begin{align}
    \mathcal{N}_{\beta\alpha} \equiv &
          \xi\mathcal G_{\rho\lambda}\cdot X\mathcal G^{\rho\lambda}\cdot\mG_{\alpha\beta}
        - \xi\mathcal G_{\rho\alpha}\cdot X\mathcal G^{\rho\lambda}\cdot \mathcal G_{\beta\lambda}
        - \xi\mathcal G_{\rho\beta}\cdot X\mathcal G^{\rho\lambda}\cdot \mathcal G_{\alpha\lambda}
        + \\\nonumber & +
          X\ln\mathcal G\cdot\xi\ln \mathcal G\cdot\mG_{\alpha\beta} 
        - \xi\ln\mathcal G\cdot X\mG_{\alpha\beta}
        - X\ln\mathcal G\cdot\xi\mG_{\alpha\beta}.
\end{align}
\end{subequations}
Raising the second index, the matrix ${\mathcal{N}_\beta}^\alpha$ can be represented as follows
\begin{subequations}
\begin{equation}
    \mathcal{N} = \mathcal{B} - \frac{1}{2} \tr(\mathcal{B}) \mathds{1},\qquad
    \mathcal{B} =
      \left(\tr(\mathfrak{a}) \mathds{1} - \mathfrak{a} \right) \mathfrak{b}
    + \left(\tr(\mathfrak{b}) \mathds{1} - \mathfrak{b} \right) \mathfrak{a},\qquad
\end{equation}
\begin{equation}
    \mathfrak{b} = \mathcal{G}_{\beta\gamma} X \mathcal{G}^{\gamma\alpha},\qquad
    \mathfrak{a} = \xi \mathcal{G}_{\beta\gamma} \cdot \mathcal{G}^{\gamma\alpha}.
\end{equation}
\end{subequations}
Since $\mathfrak{a}$, $\mathfrak{b}$ are $2\times2$ symmetric Hermitian matrices, they can be decomposed into a Pauli basis. Then the matrix $\mathcal{B}$ reads
\begin{subequations}
\begin{equation}
    \mathcal{B} =
    \frac{1}{2}\tr(\mathfrak{a}) \tr(\mathfrak{b}) \mathds{1}
    - a_i b_j \{\sigma^i \sigma^j\}
    = 
    \left(
        \frac{1}{2}\tr(\mathfrak{a}) \tr(\mathfrak{b})
        - 2 a_i b_j \delta^{ij}
    \right) \mathds{1},
\end{equation}
\begin{equation}
    \mathfrak{a}=\frac{1}{2}\tr(\mathfrak{a})\mathds{1} + a_i \sigma^i,\qquad \mathfrak{b}=\frac{1}{2}\tr(\mathfrak{b})\mathds{1} + b_i \sigma^i.
\end{equation}
\end{subequations}
We find that the matrix $\mathcal{B}$ is equal to the identity matrix, so the matrix $\mathcal{N}$ is identically zero. As a result, the first term in (\ref{FPS4}) is zero, leaving us with the second term
\begin{equation}
\mathcal{M}_{\alpha\beta}X(\rho^\alpha)\rho^\beta=0, \quad X(\rho^\alpha)=(X(\rho),0).
\end{equation}
Substituting the general solution (\ref{PR1a}) for $\rho$ into this equation, we obtain the condition $X(\rho) \sqrt{-\det(\mathcal M)}=0$. As we restrict our choice with $\det(\mathcal M) < 0$, the expression $X(\rho)$ must be zero, so $\rho$ is effectively constant for each slice. Finally, we have the following theorem.

\begin{theorem} 
Let $\mF_\Omega$ be a foliation of the manifold $M$ with non totally umbilical slices $S_\Omega$ with compact spatial section satisfying all conditions of the theorem \ref{KBG} for $\text{dim} \{\mathcal{K}_\alpha\}=2$. Then any subdomain $U_{PS} \subseteq S_\Omega$ such that the inequality (\ref{PR1b}) holds for all $p\in U_{PS}$ is a fundamental photon surface\footnote{In the case of not compact spatial section, the slice is not a fundamental photon surface by definition \cite{Kobialko:2020vqf}. However, the theorem can be generalized for such not compact surfaces.}.
\end{theorem}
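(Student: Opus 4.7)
The plan is to verify the three defining conditions (\ref{FPS2a}) of a fundamental photon surface directly from the output of Theorem \ref{KBG}. The partially umbilic structure (\ref{eq:second_form_explicit}) already secures both the umbilicity ${}^\Omega\sigma(X,Y) = h^\Omega \LP X,Y\RP$ on the $\{e_a\}$-block and the mixed vanishing ${}^\Omega\sigma(\hat\tau, X) = 0$, so the only nontrivial condition left to check is ${}^\Omega\sigma(\hat\tau,\hat\tau) = h^\Omega \LP \hat\tau,\hat\tau\RP$. Using Proposition \ref{prop:second_form} to rewrite ${}^\Omega\sigma(\hat\tau,\hat\tau)$ in terms of $\xi\mathcal{G}_{\alpha\beta}$, this condition collapses to the scalar equation (\ref{FPS3}) on the impact parameter $\rho$, whose solutions are (\ref{PR1a}); reality requires $\det(\mathcal{M})<0$, and the positivity requirement (\ref{FPS1}) on $\hat\rho$ is precisely the inequality (\ref{PR1b}) defining the photon region.

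The subtle point is that by (\ref{FPS1}) the components $\rho^\alpha$ must be constant on each slice, so I must still verify $X(\rho)=0$ for every $X\in TS_\Omega$ in the orthogonal complement of $\{\mathcal K_\alpha\}$. Differentiating (\ref{FPS3}) along $X$ yields (\ref{FPS4}) with two contributions. I would evaluate $X(\mathcal{M}_{\alpha\beta})\rho^\alpha\rho^\beta$ by substituting the integrability condition (\ref{eq:gamma_integrability}) coming from Step~1 of Theorem \ref{KBG}, then subtract (\ref{FPS3}) multiplied by $X\ln\mathcal{G}$ so that $h^\Omega$ factors out, isolating a compact expression $\rho^\beta\mathcal{N}_{\beta\alpha}\rho^\alpha/(2h^\Omega)$ in a specific matrix $\mathcal{N}$ built from $\xi\mathcal G$ and $X\mathcal G$.

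The main obstacle, and the part where the proof genuinely relies on $\mathrm{dim}\{\mathcal K_\alpha\}=2$, is to show that $\mathcal{N}$ vanishes identically. My route is a two-dimensional algebraic identity: introduce $\mathfrak{a} = \xi\mathcal{G}\cdot\mathcal{G}^{-1}$ and $\mathfrak{b} = X\mathcal{G}\cdot\mathcal{G}^{-1}$, which are $2\times 2$ matrices symmetric with respect to $\mathcal{G}$, and decompose them in a Pauli basis $\{\mathds{1},\sigma^i\}$. The anticommutation $\{\sigma^i,\sigma^j\}=2\delta^{ij}\mathds{1}$ then forces the candidate matrix $\mathcal{B} = (\tr\mathfrak{a}\,\mathds{1}-\mathfrak{a})\mathfrak{b}+(\tr\mathfrak{b}\,\mathds{1}-\mathfrak{b})\mathfrak{a}$ to be a scalar multiple of the identity, and the trace-subtracted combination $\mathcal{N}=\mathcal{B}-\tfrac{1}{2}\tr(\mathcal{B})\mathds{1}$ vanishes automatically.

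Once $\mathcal{N}=0$, the first term of (\ref{FPS4}) drops out and only $\mathcal{M}_{\alpha\beta} X(\rho^\alpha)\rho^\beta=0$ remains. Since $X(\rho^\alpha)=(X(\rho),0)$, substituting the explicit solution (\ref{PR1a}) reduces this to $X(\rho)\sqrt{-\det(\mathcal{M})}=0$, and the assumption $\det(\mathcal{M})<0$ throughout $U_{PS}$ forces $X(\rho)=0$. Hence $\hat\rho$ has constant components on each slice, all three conditions of (\ref{FPS2a}) are met, and $U_{PS}$ is a fundamental photon surface in the sense of \cite{Kobialko:2020vqf}, which completes the proof.
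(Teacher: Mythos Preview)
Your proposal is correct and follows essentially the same route as the paper: reduce the fundamental photon surface conditions to (\ref{FPS3}) via Proposition \ref{prop:second_form}, differentiate along $X$ to obtain (\ref{FPS4}), use the integrability condition (\ref{eq:gamma_integrability}) and subtract $X\ln\mathcal G$ times (\ref{FPS3}) to isolate the matrix $\mathcal N$, and then kill $\mathcal N$ by the Pauli-basis identity in dimension two, leaving $X(\rho)\sqrt{-\det\mathcal M}=0$. The only cosmetic difference is that the paper defines $\mathfrak{b}=\mathcal G_{\beta\gamma}\,X\mathcal G^{\gamma\alpha}$ (index up) rather than $X\mathcal G\cdot\mathcal G^{-1}$, but this is the same object and the argument is identical.
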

 
In particular, the region $U_{PR}\subseteq M$, such that the inequality (\ref{PR1b}) holds for any point $p\in U_{PR}$, is a photon region. This theorem generalizes the connection between the existence of Killing tensors of this type and photon surfaces or spherical null geodesics, which was noted in Refs. \cite{Koga:2020akc,Pappas:2018opz,Glampedakis:2018blj}. Unfortunately, in the opposite direction, the theorem is not true, since the existence of fundamental photon surfaces does not guarantee the existence of the Killing tensor. As a counterexample, one can suggest Zipoy-Voorhees metric \cite{Kodama:2003ch} where the fundamental photon surfaces exists \cite{Galtsov:2019fzq} but there is no nontrivial Killing tensor \cite{Lukes}. Nevertheless, the existence of fundamental photon surfaces can serve as a sign that the Killing tensor can be presented in the corresponding metric, and it is advisable to check the conditions of consistency and integrability.

\section{Axially symmetric manifolds} \label{sec:axially} 
 
Consider a Lorentzian manifold $M$ with the metric tensor
\begin{align}
ds^2=-f (dt-\omega d\phi)^2+\lambda dr^2+ \beta d\theta^2 +\gamma d\phi^2. 
\end{align}
and the foliation $\mF_{\Omega}$ with timelike slices $r=\Omega$. For such a foliation, one can find the following quantities
\begin{align}
    \xi = \lambda^{-1/2}\partial_r, \quad
    h^{\Omega} = -\frac{1}{2}\lambda^{-1/2}\cdot\partial_r \ln \beta, \quad
    \varphi=\lambda^{1/2},\quad
    \mathcal G^{\alpha\beta} = \frac{1}{\gamma}\begin{pmatrix}
        \omega^2- \gamma f^{-1}  & \omega  \\
        \omega  & 1 \\
    \end{pmatrix}.
\end{align}
The compatibility and integrability conditions (\ref{K1}), (\ref{eq:gamma_integrability}) take the form
\begin{align}
    \partial_\theta(\lambda \cdot\partial_r\ln\beta)=0,\qquad
    \partial_\theta \left(
        \mathcal G^{\alpha\beta}
        +
        \frac{1}{\partial_r \ln \beta} \partial_r \mathcal G^{\alpha\beta}
    \right) = 0
\end{align}
The Eq. (\ref{K2}) can be solved as follows
\begin{align}
    \alpha=A_\theta \cdot\beta+B_\theta,
\end{align}
where the arbitrary functions $A_\theta$, $B_\theta$ depend on $\theta$ only, obeying the condition $\partial_\theta \alpha = 0$. As the result, we have one more necessary condition for the case in this section: the function $\beta$ must be of the form 
\begin{equation}\label{eq:beta_condition}
    \beta(r,\theta) = \beta_1(\theta) \beta_2(r) + \beta_3(\theta),
\end{equation}
where $\beta_{1,2,3}$ are some functions of the corresponding variables. From the equation (\ref{K3s}), the normal component is $k^\xi_N=B_\theta$. Next, we can define the matrix $\gamma$:
\begin{align}
    \gamma^{\alpha\beta} = 
    - \beta A_\theta \cdot \mathcal G^{\alpha\beta}
    - \nu^{\alpha\beta}.
\end{align}
The integrability condition guarantees that $\gamma^{\alpha\beta}$ always satisfies the equations (\ref{eq:kkv3b}) for some $\nu^{\alpha\beta}$ depending only on $\theta$. On the other hand, we have to find a $\nu^{\alpha \beta}$ that makes the equation $\partial_r \gamma^{\alpha\beta}= 0$ true. Therefore, we can omit the $\theta$-dependent part in $\gamma^{\alpha\beta}$ to some constant matrix instead of looking for $\nu^{\alpha\beta}$. Combining everything together, we get the final Killing tensor in the holonomic basis 
\begin{align}
    K^{\mu\nu} =
    \alpha g^{\mu\nu}
    + \sum_{\alpha,\beta=t,\phi}\gamma^{\alpha\beta} \mathcal{K}^\mu_\alpha \mathcal{K}^\nu_\beta
    - \beta A_\theta \lambda^{-1} \delta_r^\mu \delta_r^\nu,
\end{align}

The compatibility and integrability conditions, as well as the condition on the function $\beta$, are invariant under the multiplicative transformations of the form
\begin{align}
    \lambda\rightarrow\lambda'=u(r)\lambda, \quad \beta\rightarrow\beta'=v(\theta)\beta.
\end{align}
If $\beta$ possesses the aforementioned form (\ref{eq:beta_condition}), one can simplify the integrability condition by the substitution $\mathcal{G}^{\alpha\beta} = \tilde{\mathcal{G}}^{\alpha\beta} \cdot \beta_1 / \beta$. Then, the integrability condition is $\partial_\theta\partial_r \tilde{\mathcal{G}}^{\alpha\beta} = 0$, which is solved by $\tilde{\mathcal{G}}^{\alpha\beta} = \tilde{\mathcal{G}}_r^{\alpha\beta}(r) + \tilde{\mathcal{G}}_\theta^{\alpha\beta}(\theta)$. This generalizes the result of Ref. \cite{Johannsen:2015pca}, where the similar condition was obtained from the separability of the Hamilton-Jacobi equation. In our case, we have also included the $\beta_2(\theta)$ term. Furthermore, the compatibility condition and the function form (\ref{eq:beta_condition}) leads to the form of $\lambda=\lambda_r(r) \beta/\beta_1$, where $\lambda_r$ is an arbitrary function of $r$.

\subsection{Example: Kerr solution}

The metric for Kerr solution in the Boyer-Lindquist coordinates reads
\begin{equation}
    ds^2 =
    - f(dt - \omega d\phi)^2
    + \frac{\Sigma}{\Delta}dr^2
    + \Sigma d\theta^2
    + \Delta f^{-1}\sin^2\theta d\phi^2,
\end{equation}
\begin{equation}
    f = \frac{\Delta - a^2\sin^2\theta}{\Sigma}, \qquad
    \Sigma = r^2 + a^2 \cos^2\theta,\qquad
    \omega=\frac{-2Mar\sin^2\theta}{\Delta-a^2\sin^2\theta},\qquad
    \Delta = r(r-2M) + a^2.
\end{equation}
In the Kerr metric, $\beta=r^2+a^2\cos^2\theta$, $\lambda=\beta/\Delta$,   satisfy the compatibility condition. One can explicitly verify that $\mathcal{G}^{\alpha\beta}$ satisfies the integrability equation. In this case $\alpha=r^2$, $A_\theta=1$ and $k_N^\xi=B_\theta=-a^2\cos^2\theta$ (here we have fixed the multiplicative integration constant, which appears due to the linearity of Killing equations). The part of $\gamma^{\alpha\beta}$ independent on $\theta$ reads
\begin{align}
\gamma^{\alpha\beta}=\Delta^{-1}\left(\begin{matrix}
 (a^2+r^2)^2 &   a(a^2+r^2) \\
a(a^2+r^2) &  a^2 \\
\end{matrix}\right).
\end{align}
Finally, we get $\alpha$, $\gamma^{\alpha\beta}$ and $k_N^\xi$, which correspond to the well-known nontrivial Killing tensor for Kerr solution
\begin{align}
\alpha=r^2, \quad k^\xi_N=-a^2\cos^2\theta, \quad  \gamma^{\alpha\beta}=\Delta^{-1}\left(\begin{matrix}
 (a^2+r^2)^2 &   a(a^2+r^2) \\
a(a^2+r^2) &  a^2 \\
\end{matrix}\right).
\end{align}

\subsection{Example: Plebanski-Demianski solution}
Consider a more complicated case of the general Plebansky-Demyansky class of stationary axially symmetric solutions of type D to the Einstein-Maxwell equations with a cosmological constant. The $ds^2$ metric is read from the conformally transformed in Boyer-Lindquist coordinates 
\begin{align}
\Omega^2ds^2&=\Sigma\left(\frac{dr^2}{\Delta_r}+\frac{d\theta^2}{\Delta_\theta}\right)+\frac{1}{\Sigma}
\left((\Sigma+a\chi)^2\Delta_\theta\sin^2\theta-\Delta_r\chi^2\right)d\phi^2 \\
\label{Sol222}
&+\frac{2}{\Sigma}\left(\Delta_r\chi-a(\Sigma+a\chi)\Delta_\theta\sin^2\theta\right)dt d\phi-\frac{1}{\Sigma}
\left(\Delta_r-a^2\Delta_\theta\sin^2\theta \right)dt^2,
\end{align}
where we have defined functions
\begin{align}
\Delta_\theta &=1-a_1\cos\theta-a_2\cos^2\theta, \qquad \Delta_r=b_0+b_1r+b_2r^2+b_3r^3+b_4r^4\,,\\
\Omega &=1-\lambda(N+a \cos\theta)r, \quad \Sigma=r^2+(N+a\cos\theta)^2\,,\quad
\chi =a \sin^2\theta-2N(\cos \theta+C)\,,
\end{align}
with the following constant coefficients in $\Delta_\theta$ and $\Delta_r$:
\begin{align}
a_1 &=2aM\lambda-4aN\left(\lambda^2(k+\beta)+\frac{\Lambda}{3}\right), \quad a_2=-a^2\left(\lambda^2(k+\beta)+\frac{\Lambda}{3}\right),\quad
b_0=k+\beta, \\  
b_1 &=-2M,\quad
b_2=\frac{k}{a^2-N^2}+4MN\lambda-(a^2+3N^2)\left(\lambda^2(k+\beta)+\frac{\Lambda}{3}\right),\\
b_3 &=-2\lambda\left(\frac{kN}{a^2-N^2}-(a^2-N^2)\left(M\lambda-N\left(\lambda^2(k+\beta)+\frac{\Lambda}{3}\right)\right)\right),\quad
b_4 =-\left(\lambda^2k+\frac{\Lambda}{3}\right),\\ 
k &=\frac{1+2MN  \lambda-3N^2\left(\lambda^2\beta+\frac{\Lambda}{3}\right)}{1+3\lambda^2N^2(a^2-N^2)}(a^2-N^2), \quad 
\lambda=\frac{\alpha}{\omega}, \quad \omega=\sqrt{a^2+N^2}\,.
\end{align}
Generally, the coordinates $t$ and $r$ range over the whole $\mathbb R$, while $\theta$ and $\phi$ are the
standard coordinates on the unit two-sphere. Seven independent parameters $M,N,a,\alpha,\beta,\Lambda,C$ can be interpreted as the physical charges in the following way: $M, N$ are the mass and the NUT parameter (magnetic mass), $a$ is the Kerr-like rotation parameter, $\beta=e^2+g^2$ comprises the electric $e$ and magnetic $g$ charges, $\alpha$ is the acceleration parameter, $\Lambda$ is the cosmological constant, and thd constant $C$ defines the location of the Misner string. 
 
{\bf The first step} is to check the compatibility and integrability conditions (\ref{K1}), (\ref{eq:gamma_integrability}). The first one holds if $\alpha \cdot a=0$, i.e. either the acceleration $\alpha$ or the rotation $a$ are zero. Indeed, as  shown in Ref. \cite{Kubiznak:2007kh}, the general PD solution with acceleration possesses a conformal Killing tensor, but not the usual one. In the case $a=0$, the second condition does not hold. So, further we will consider the solution with zero acceleration $\alpha=0$, which corresponds to the dyonic Kerr-Newman-NUT-AdS solution.

For the {\bf second step} we pick up the $r$-dependent part from $\beta = \Sigma / \Delta_\theta$ for $\alpha$
\begin{align}
\beta=\frac{\Sigma}{\Delta_\theta} \quad \Rightarrow \quad \alpha=r^2, \quad A_\theta=\Delta_\theta, \quad  k^{\xi}_N=B_\theta=-\Delta^{-1}_\theta(N+a \cos \theta)^2.
\end{align}
Similarly, for the {\bf third step}, the $r$-dependent part for $\gamma^{\alpha\beta}$ is defined as
\begin{align}
\gamma^{\alpha\beta} = 
\Delta^{-1}_r \begin{pmatrix}
     P^2  & aP  \\
      aP  & a^2 \\
\end{pmatrix}
\end{align}
where $P = \Sigma + a\chi = r^2 + a^2 - 2 a C N + N^2$. In the last {\bf fourth step}, this gives us the nontrivial Killing tensor for the Kerr-Newman-NUT-AdS metric.

\subsection{Example: Kerr-Taub-NUT-AdS multicharge gauged supergravity solution}

Here we construct the Killing tensor for the metric given by Chong, Cvetic, Lu, and Pope in \cite{Chong:2004na}. The solution was obtained by starting out with the four dimensional Kerr-Taub-NUT metric, dimensionally reducing it to three dimensions with respect to time, and then lifting back after “dualization”. The metric is given by
\begin{align}
ds^2&=-\frac{\Delta_r}{a^2 W} (a dt+u_1 u_2 d\phi)^2+\frac{\Delta_u}{a^2 W}(a dt+r_1 r_2 d\phi)^2+W\left(\frac{dr^2}{\Delta_r}+\frac{du^2}{\Delta_u}\right),
\end{align}
where 
\begin{align}
&W=r_1r_2+u_1u_2, \quad r_i=r+2 M s^2_i, \quad u_i=u+2 N s^2_i,\\
&\Delta_r=r^2+a^2-2Mr +e^2r_1 r_2(r_1 r_2+a^2),\\
&\Delta_u=-u^2+a^2-2Nu +e^2u_1 u_2(u_1 u_2-a^2).
\end{align}
Here $s_i=\sinh \delta_i$ and $\delta_1$ is the magnetic charge, $\delta_2$  is the electric charge, $N$ is the NUT parameter, $a$ is the
rotation parameter, and $e$ is the gauge parameter. The cosmological constant $\Lambda$ is given by $\Lambda=-e^2$. 

{\bf The first step} is to check the compatibility conditions (\ref{K1}) for the foliations $r=\Omega$, which reads as
\begin{align}
\partial^2_{ru} W=0,
\end{align}
and is carried out identically for a given $W$. For the {\bf second step} we pick up the $r$-dependent part from $\beta = W/\Delta_u$ for $\alpha$
\begin{align}
\beta=\frac{W}{\Delta_u} \quad \Rightarrow \quad \alpha=r_1r_2, \quad A_\theta=\Delta_u, \quad  k^{\xi}_N=B_\theta=-u_1 u_2.
\end{align}
Similarly, for the {\bf third step}, the $r$-dependent part for $\gamma^{\alpha\beta}$ is presented as
\begin{align}
\gamma^{\alpha\beta} = 
\Delta^{-1}_r \begin{pmatrix}
     r^2_1  r^2_2 & a r_1  r_2   \\
      a r_1  r_2  & a^2 \\
\end{pmatrix}.
\end{align}
Note that the symmetry between $u$ and $r$ implies that there is another suitable foliation $u=\Omega$ with 
\begin{align}
&\alpha=u_1u_1, \quad  k^\xi_N=- r_1 r_2, \quad \gamma^{\alpha\beta} =\Delta^{-1}_u \begin{pmatrix}
    - u^2_1  u^2_2 & a u_1  u_2   \\
      a u_1  u_2  & -a^2 \\
\end{pmatrix}.
\end{align}
However, both foliations generate the same Killing tensor coinciding with one obtained in Ref. \cite{Vasudevan:2005bz}.

\section{Conclusions}
We have presented a purely geometric method of generating a Killing tensor in spacetimes with foliation of codimension one. For spacetimes foliated by arbitrary slices, we have derived general {\it lift equations} (\ref{eq:killing_map_xy}) relating the Killing tensor with its projection onto them and their orthogonal complement. This result includes the case of totally geodesic slices considered in Ref. \cite{Garfinkle:2010er} as a particular case. It has been demonstrated that not totally geodesic slices allow the simplification of the lift equations in the form (\ref{eq:kkv1}) as well. Using these equations one can try to lift a trivial Killing tensor defined in such slices into a nontrivial Killing tensor in the bulk. For slices tangent to Killing vectors, the lift equations can be simplified to the form (\ref{eq:kkv3b}). In this case the foliation must satisfy some consistency (\ref{K1}) and integrability (\ref{eq:gamma_integrability}) conditions, which we have derived explicitly. Furthermore, we have resolved the lift equations presenting the solution  in terms of the integrals. 

Finding a foliation suitable for integrability and consistency conditions can be challenging. However, we have found that the existence of a foliation that satisfies the integrability conditions ensures that the slices represent fundamental photon surfaces if the corresponding inequalities hold. This generalizes the result of \cite{Koga:2020akc} to the case of arbitrary stationary spaces. In the opposite direction, the existence of fundamental photon surfaces does not guarantee the existence of the Killing tensor, but may serve as an indication that the Killing tensor may exist. Therefore, it is recommended to check the consistency and integrability conditions for fundamental photon surfaces. This makes the search for fundamental photon surfaces even more important for studying the integrability of geodesic motion. 

We apply this technique to Kerr, Kerr-Newman-NUT-AdS and Kerr-NUT-AdS multicharge gauged supergravity solutions, where the Killing tensor arises purely algebraically without any differential equations have been solved. In the latter case, the discrete symmetry of the supergravity solution allows to construct the nontrivial Killing tensor using two different foliations.

The question of the relationship between fundamental photon surface and the Killing tensor of an arbitrary nature (regardless of whether it is trivial in slices or not) remains open. As the key directions of the further study, one can suggest a more general analysis of the compatibility and integrability conditions for the original system of differential equations (\ref{eq:killing_map_xy}). The presented approach can be generalized to the case of conformal Killing vector and tensor fields, which is also one of the directions for further research. 

\begin{acknowledgments}
The work was supported by the Russian Foundation for Basic Research on the projects 19-32-90095, 20-52-18012, and the Scientific and Educational School of Moscow State University “Fundamental and Applied Space Research”.
\end{acknowledgments}

\appendix

\section{Proof of the Proposition \ref{prop:killing_map}} \label{sec:Proposition}

\begin{proof}
Let us introduce here arbitrary tangent vectors $X,Y,Z\in TS_\Omega$ and a normal vector $\xi\in (TS_\Omega)^{\perp}$. 
In this proof we will make use of the identity following from the self-adjointness of mapping $K$
\begin{align}
\LP K^X_{\Omega},Y\RP=\LP X,K^Y_{\Omega}\RP, \quad k^X_N=\epsilon\LP X,K^\xi_{\Omega}\RP.
\end{align}
Now, consider the l.h.s. of Killing equations (\ref{eq:killing_mapping_equation}) projected onto $TS_\Omega$ and $\xi$ separately
\begin{subequations}
\begin{align}
    &
    \LP\LCM_\xi K(\xi),\xi\RP
    =
    \LP\LCM_\xi( K^\xi_{\Omega}+k^\xi_N\xi),\xi\RP-\LP\LCM_\xi \xi,K^\xi_\Omega+k^\xi_N\xi\RP
    = \\ &\qquad  =\nonumber
    \LP\LCM_\xi( K^\xi_{\Omega}),\xi\RP+\LP\LCM_\xi(k^\xi_N)\xi,\xi\RP-\LP\LCM_\xi \xi,K^\xi_\Omega\RP
    =
    2\epsilon\LCM_{K^\xi_{\Omega}}\ln \varphi+\epsilon \xi(k^\xi_N),
\end{align} 
\begin{align} \label{eq:projected_killing_2}
    &
    \LP\LCM_X K(\xi),\xi\RP+\LP\LCM_\xi K(X),\xi\RP+\LP\LCM_\xi K(\xi),X\RP
    = \\ & \qquad =\nonumber
      \LP\LCM_X (K^\xi_{\Omega}+k^\xi_N\xi),\xi\RP
    - \LP\LCM_X \xi,K^\xi_{\Omega} + k^\xi_N\xi\RP
    + \LP\LCM_\xi (K^X_{\Omega} + k^X_N\xi),\xi\RP
    \\&\qquad\quad\nonumber
    - \LP\LCM_\xi X,K^\xi_{\Omega}+k^\xi_N\xi\RP
    + \LP\LCM_\xi (K^\xi_{\Omega}+k^\xi_N\xi),X\RP
    - \LP\LCM_\xi \xi,K^X_{\Omega}+k^X_N\xi\RP
    = \\ &\qquad=\nonumber
      \LP\LCM_X (K^\xi_{\Omega}),\xi\RP
    + \LP\LCM_X (k^\xi_N)\xi,\xi\RP
    - \LP\LCM_X \xi,K^\xi_{\Omega}\RP
    + \LP\LCM_\xi (K^X_{\Omega}),\xi\RP
    + \LP\LCM_\xi (k^X_N)\xi,\xi\RP
    \\&\qquad\quad\nonumber
    - \LP\LCM_\xi X,K^\xi_{\Omega}\RP
    - \LP\LCM_\xi X,k^\xi_N\xi\RP
    + \LP\LCM_\xi (K^\xi_{\Omega}),X\RP
    + k^\xi_N\LP\LCM_\xi \xi,X\RP
    - \LP\LCM_\xi \xi,K^X_{\Omega}\RP
    =\\&\qquad=\nonumber
      2\epsilon\cdot{}^\Omega\sigma(X,K^\xi_{\Omega})
    + \epsilon X(k^\xi_N)
    + 2\epsilon\LCM_{K^X_{\Omega}}\ln\varphi
    - 2\epsilon k^\xi_N\LCM_X\ln \varphi
    + 2\LP\LCM_\xi (K^\xi_{\Omega}),X\RP,
\end{align} 
\begin{align}
    &
    \underset{X \leftrightarrow Y}{{\rm Sym}}\left\{\LP\LCM_X K(Y),\xi\RP+\LP\LCM_X K(\xi),Y\RP+\LP\LCM_\xi K(X),Y\RP\right\}
    =\\&\qquad=\nonumber
    \underset{X \leftrightarrow Y}{{\rm Sym}}\Big\{
          \LP\LCM_X (K^Y_{\Omega}+k^Y_N\xi),\xi\RP
        - \LP\LCM_X Y,K^\xi_{\Omega}+k^\xi_N\xi\RP
        + \LP\LCM_X (K^\xi_{\Omega}+k^\xi_N\xi),Y\RP
    \\&\qquad\qquad\qquad\nonumber
        -\LP\LCM_X \xi,K^Y_{\Omega}+k^Y_N\xi\RP
        + \LP\LCM_\xi (K^X_{\Omega}+k^X_N\xi),Y\RP
        -\LP\LCM_\xi X,K^Y_{\Omega}+k^Y_N\xi\RP
    \Big\}
    =\\&\qquad=\nonumber
    \underset{X \leftrightarrow Y}{{\rm Sym}}\Big\{
          \LP\LCM_X (K^Y_{\Omega}),\xi\RP
        + \LP\LCM_X (k^Y_N)\xi,\xi\RP
        - \LP\LCM_X Y,k^\xi_N\xi\RP
        - \LP\LCM_\xi X,k^Y_N\xi\RP
    \\&\qquad\qquad\qquad\nonumber
        + \LP\LCM_X (K^\xi_{\Omega}),Y\RP
        + k^\xi_N\LP\LCM_X \xi,Y\RP
        + \LP\LCM_\xi (K^X_{\Omega}),Y\RP
        + k^X_N\LP\LCM_\xi \xi,Y\RP
    \\&\qquad\qquad\qquad\nonumber
        - \LP\LCM_X \xi,K^Y_{\Omega}\RP
        - \LP\LCM_X Y,K^\xi_{\Omega}\RP
        - \LP\LCM_\xi X,K^Y_{\Omega}\RP
    \Big\}
    =\\&\qquad=\nonumber
    \underset{X \leftrightarrow Y}{{\rm Sym}}\Big\{
          2\LP\LCS_X (K^\xi_{\Omega}),Y\RP
        - 2\epsilon k^X_N\LCM_Y\ln \varphi
        + 2\epsilon\cdot {}^\Omega\sigma(X,K^Y_{\Omega})
        - 2\epsilon k^\xi_N\cdot{}^\Omega\sigma(X,Y)
     \\&\qquad\qquad\qquad\nonumber
        + \LP\LCM_\xi (K^X_{\Omega}),Y\RP
        - \LP\LCM_\xi X,K^Y_{\Omega}\RP\Big\},
\end{align} 
\begin{align}
    &
    \underset{X \leftrightarrow Y\leftrightarrow Z}{{\rm Sym}}\left\{\LP\LCM_X K(Y),Z\RP\right\}
    =
    \underset{X \leftrightarrow Y\leftrightarrow Z}{{\rm Sym}}\{\LP\LCM_X (K^Y_{\Omega}+k^Y_N\xi),Z\RP-\LP\LCM_X Y,K^Z_{\Omega}+k^Z_N\xi\RP\}
    \\&\qquad=\nonumber
    \underset{X \leftrightarrow Y\leftrightarrow Z}{{\rm Sym}}\{\LP\LCM_X (K^Y_{\Omega}),Z\RP+\LP\LCM_X (k^Y_N\xi),Z\RP-\LP\LCM_X Y,K^Z_{\Omega}\RP-\LP\LCM_X Y,k^Z_N\xi\RP\}
    \\&\qquad=\nonumber
    \underset{X \leftrightarrow Y\leftrightarrow Z}{{\rm Sym}}\left\{\LP\LCS_X K^Y_\Omega,Z\RP-2 \epsilon k^Z_N\cdot{}^\Omega\sigma(X,Y)\right\},
\end{align} 
\end{subequations}
where we used $\LP\LCM_\xi (k^X_N)\xi,\xi\RP=\epsilon\LP\LCM_\xi \LP X ,K^\xi_{\Omega}\RP\xi,\xi\RP=\LP \LCM_\xi X,K^\xi_{\Omega}\RP+\LP X,\LCM_\xi (K^\xi_{\Omega})\RP$ in Eq. (\ref{eq:projected_killing_2}). By virtue of Killing equations, each final expression is equal to zero.
\end{proof}

\section{Codazzi equation}
\label{sec:Matrix}

In order to simplify and generalize the calculations, we will introduce a matrix notation as follows. First of all, we choose an arbitrary basis in the subspace spanned by Killing vectors $e_\alpha\in\{\mK_\alpha\}$ and define new matrices
\begin{align}
\mathcal P\equiv\xi(\alpha)^{-1}\cdot{}^\Omega\Sigma(e_\alpha,e_\beta), \quad \mathcal M\equiv\LP \mathcal K_\alpha,e_\beta\RP, \quad \Gamma\equiv\xi(\gamma^{\alpha\beta}).
\end{align}
In particular, equations (\ref{eq:kkv5}) will be written in the form
\begin{align}
\xi(\alpha)\cdot\mathcal P=\mathcal M^{T}\Gamma\mathcal M, \quad \Gamma=\xi(\alpha)\cdot(\mathcal M^{T})^{-1}\mathcal P  \cdot\mathcal M^{-1}.
\label{kkv13}
\end{align}
Acting as before in Sec. \ref{sec:generation} and calculating the derivative of $\xi(\alpha)\cdot\mathcal P$, we obtain integrability conditions for $\mathcal P$ in matrix form
\begin{align}
    (\LCS_Z\mathcal P) = \mathcal{Q}^T + \mathcal{Q}, \quad
    \mathcal{Q} = \mathcal P\mathcal{M}^{-1}\LCS_Z\mathcal{M},
\end{align}
where $(\LCS_Z\mathcal M)_{\alpha\beta}\equiv\LP\LCS_Z \mathcal K_\alpha,e_\beta\RP$ and $(\LCS_Z\mathcal P)_{\alpha\beta}\equiv\LCS_Z(\mathcal P(e_\alpha,e_\beta))-\mathcal P(\LCS_Ze_\alpha,e_\beta)-\mathcal P(e_\alpha,\LCS_Ze_\beta)$. Using the integrability condition, Codazzi equations \cite{Chen} for $\RIM(X,\xi)$ take the form
\begin{align}
\frac{\epsilon}{h^{\Omega}}\cdot\RIM(X,\xi) =3\cdot{\rm Tr}(\mathcal P \LCS_X \ln \mathcal M)+\LCS_X \ln  h^{\Omega} \cdot\left\{{\rm Tr}(\mathcal P)+m-2\right\}.
\end{align}
Choosing the basis $e_\alpha=\mathcal K_\alpha$ (i.e. $\mathcal M=\mathcal G$), Codazzi equations reads as
\begin{align}
\RIM(X,\xi) =\frac{3}{4}\cdot\xi\mathcal G_{\alpha \beta}\cdot X\mathcal G^{\alpha \beta} +\frac{3}{2}\cdot X \ln \varphi \cdot\xi\ln\mathcal G+3\epsilon h^{\Omega} \left\{(n-m+2)\cdot X \ln \varphi-\frac{1}{2}\cdot X\ln \mathcal G\right\}.
\end{align}

\end{document}